\newtheorem{prop}{Proposition}[section]
\newtheorem{lemma}[prop]{Lemma}
\newtheorem{coroll}[prop]{Corollary}
\theoremstyle{remark}
\newtheorem{rmk}[prop]{Remark}
\newcommand{\1}{\mathbf{1}}
\newcommand{\E}{\mathop{{}\mathbb{E}}}
\newcommand{\R}{\mathbb{R}}
\newcommand{\Pb}{\mathbb{P}}
\renewcommand{\P}{\mathbb{P}}
\newcommand{\erre}{\mathbb{R}}
\DeclareMathOperator{\sgn}{sgn}
\newcommand{\ip}[2]{\langle #1,#2 \rangle}
\title{On the relation between forecast precision and trading profitability
of financial analysts}
\author{Carlo Marinelli\\
\small Department of Mathematics\\
  \small University College London, United Kingdom\\
\and 
Alex Weissensteiner\\
\small School of Economics and Management\\
  \small Free University of Bolzano, Italy\\}
\date{28 January 2013}
\begin{document}
\maketitle
\begin{abstract}
  We analyze the relation between earning forecast accuracy and
  expected profitability of financial analysts. Modeling forecast
  errors with a multivariate Gaussian distribution, a complete
  characterization of the payoff of each analyst is provided. In
  particular, closed-form expressions for the probability density
  function, for the expectation, and, more generally, for moments of
  all orders are obtained. Our analysis shows that the relationship
  between forecast precision and trading profitability need not to be
  monotonic, and that, for any analyst, the impact on his expected
  payoff of the correlation between his forecasts and those of the
  other market participants depends on the accuracy of his
  signals. Furthermore, our model accommodates a unique
  full-communication equilibrium in the sense of \cite{Radner1979}: if
  all information is reflected in the market price, then the expected
  payoff of all market participants is equal to zero.
\end{abstract}

\section{Introduction}
Many empirical studies indicate that financial analysts differ in
their forecast accuracy \citep[see, e.g.,][]{Stickel1992, Sinha1997},
and that these differences are persistent over time
\citep[see][]{mikhail2004}. Therefore it is natural to ask how the
forecast ability of an analyst translates into the profitability of a
trading strategy based on his advice. This question is addressed,
e.g., in the works of \cite{Loh2006} and \cite{Ertimur2007}. Both
papers rank analysts according to their earnings forecast accuracy and
show that the difference between factor-adjusted returns resulting
from recommendations in the highest accuracy quintile and in the
lowest accuracy quintile is significantly positive.
However, as noted by \cite{Ertimur2007},
since both papers focus only on the contemporaneous relationship
between accuracy and profitability, the reported abnormal excess
returns among analysts cannot be considered as evidence for the
existence of an implementable ex-ante trading strategy.

Furthermore, a strand of literature reports that earning abnormal
trading returns based on the recommendations of financial analysts is
by no means an easy task: \cite{Bradshaw2004} shows that although
earning forecasts have the highest explanatory power for
recommendations, these projections have the least association with
future excess returns; \cite{Barber2001} and \cite{mikhail2004}
conclude that, after trading costs are taken into consideration, the
differences in trading performance among analysts become
insignificant; \cite{Brown2008} argue that reported abnormal returns
might be spurious due to the fact that forecast errors are scaled by
share prices.

To explain the absence of a clear positive relationship between
forecast precision and trading profitability one can simply invoke the
efficient market hypothesis: if market prices reflect correctly all
available information, then, due to the level playing field, no market
participant can earn abnormal returns. The paradox of the efficient
market hypothesis is that, if every investor believed in the
efficiency of the market, then the market would not be efficient
because no one would have an incentive to process information. This
issue is addressed by \cite{Grossman1980}, who argues that the
strong-form efficient market hypothesis is not a meaningful assumption
and that gathering information makes sense up to the point where its
marginal cost equals its marginal benefit.

Another natural way to analyze the problem is to consider inefficient
markets. Different simulation studies by \cite{Schredelseker1984,
  Schredelseker2001} and \cite{Pfeifer2009} show that for markets out
of equilibrium and with asymmetric information the relationship
between forecast accuracy and trading profitability might be
non-monotonic. This could be an explanation of the fact that linear or
monotonic statistical measures, such as Pearson or Spearman
correlation, detect only weak or no dependence in the empirical
data. The main findings were confirmed also in experimental market
settings, see \cite{huber2007} and
\cite{Huber2008}. \cite{Lawrenz2012} propose a theoretical model where
agents learn, i.e. improve their forecast abilities in a Bayesian
sense. They use numerical integration to calculate the expected profit
of the single agents, and with a regression analysis they explain the
result by forecast errors of the single analyst, by forecast errors of
the others and by correlation effects among market
participants. Surprisingly, existing empirical studies in this field
seem to neglect the covariance of the forecast errors as an important
explanatory variable.

Let us now turn to the objectives of this work: we propose a
one-period model with asymmetric information, in the context of which
we derive closed-form expressions for the probability density
function, the expectation, as well as all higher moments of the payoff
obtained by each market participant. We are not aware of other
contributions in the literature where such complete characterizations
in a model with asymmetric information have been obtained.
Our analysis yields that any market participant benefits from a
reduction (increase) in the volatility of his own signal (of the
signals of other market participants) whenever his own signal is
negatively correlated with the aggregated (weighted) signal of all
other market participants. That is, the expected payoff of any agent
improves if his signal becomes more accurate, provided his signal is
negatively correlated to the aggregated signal of all other agents.
This observation can be interpreted saying that, in this case, agents
have an incentive to improve their forecast skills (on this issue cf.,
e.g., \cite{Mikhail1999}, \cite{BravHeaton02} or \cite{Markov2006}).
On the other hand, if his signal is positively correlated with the
aggregated forecasts of all other market participants, a $J$-shaped
relationship between forecast accuracy and expected payoff
exists. Such a non-monotonic relationship was first reported in
simulation studies by \cite{Schredelseker1984, Schredelseker2001} and
empirically confirmed by \cite{huber2007} and \cite{Huber2008}. Our
model provides a rigorous explanation for the emergence of such
effects. 

Another implication of our analysis is that, for any agent, the impact
of the correlation between his own signal and the aggregated signal of
all other participants on his expected payoff depends on the accuracy
of the signal: if the relative accuracy of this signal is above
(below) a certain threshold, the impact of an increase in correlation
is positive (negative), while for intermediate levels of accuracy the
impact of correlation is non-monotonic.
In this sense our model offers an explanation of two (apparently)
contradicting empirical observations that have appeared in the
literature, i.e. that analysts with a high reputation tend to issue
similar predictions \citep[known as herding effect, see,
e.g.,][]{Graham1999}, as well as to produce recommendations that deviate
significantly from the consensus forecasts \citep[see,
e.g.][]{Lamont2002}, or to follow an anti-herding strategy \citep[see,
e.g.][]{Bernhardt2006}.

Finally, our model accommodates a full communication equilibrium in
the sense of \cite{Radner1979}, and this equilibrium is unique. If all
available information is correctly reflected in the market price, then
the expected trading payoff of all analysts in our model is equal to
zero.

The remaining content is organized as follows: in Section
\ref{s:model} we introduce the model (in particular, we obtain an
expression for the expected payoff of each agent), discuss some of its
implications through an accurate sensitivity analysis, and prove the
existence and uniqueness of a full communication equilibrium. In
Section \ref{s:joint} we obtain the probability density function and
we compute moments of all orders for the payoff of each agent. A
numerical example is provided in Section \ref{s:num}, and Section
\ref{s:con} concludes.

\medskip

\noindent \textbf{Notation.} We denote the Euclidean scalar product by
$\ip{\cdot}{\cdot}$. The Gaussian law on $\R^n$ with mean $m$ and
covariance matrix $Q$ is denoted by $N_{\R^n}(m,Q)$, and we omit the
subscript if the space is clear. The distribution and density
functions of the standard Gaussian law on $\R$ are denoted by $\Phi$
and $\phi$, respectively.

\section{Model}\label{s:model}
In order to model incomplete information we assume that $n$
risk-neutral market participants do not know the true (or fair) value
of a company. However, they process available information (e.g.,
accounting statements) and try to infer the true value of the firm
\cite[see, e.g.,][]{Barron1998, Markov2006}. Although we assume that
each market participant $i$, $i=1,2,\ldots,n$, has an unbiased signal
(i.e., forecast) $\xi_i$ about the fair value, agents are
heterogeneous along two dimensions. In particular, agents differ in
the precision of their estimates. In this way we capture the idea that
analysts might have distinct skills and/or data at their
disposal. Moreover, we assume that the correlation between individual
forecasts may differ among analysts \citep[see, e.g.,][]{Fischer1998}.
We assume that $\xi=(\xi_1,\dots,\xi_n)$ is a vector of centered
jointly Gaussian random variables with non-singular covariance matrix
$Q$.

According to their price forecasts, analysts submit conditional buy
and sell orders. Whenever the price is below his own estimate
($p<\xi_i$), analyst $i$ is willing to buy and therefore takes a long
position, otherwise ($p>\xi_i$) he takes a short position. The true
value is revealed immediately after the trade. Following
\cite{Fischer1998}, we consider a dealer market, i.e. we assume that a
market maker clears the market by buying and selling for his own
account. We assume that this market maker sets the price $p$ equal to
a weighted average of the different estimates, that is,
\begin{equation}
p := \sum_{i=1}^n w_i\xi_i = \ip{w}{\xi},
\label{eq:p}
\end{equation}
where $w=(w_1,\ldots,w_n) \in \R^n$, $\sum_i w_i=1$. In order to avoid
degenerate (and trivial) cases, we assume that at least two elements
of $w$ are not zero. Of course, setting the price equal to the mean of
the single estimates is a special case of this pricing mechanism. If
the market maker exploited additional information, then he could
assign more weight to more accurate analysts. This is in line with
empirical studies which report higher price reactions to forecast
revisions of analysts with a higher reputation, see, e.g.,
\cite{Gleason2003}. Furthermore, as we will show in
Subsection~\ref{s:ree}, the linear pricing function \eqref{eq:p} is
flexible enough to incorporate correctly all signals revealed by the
market participants and to lead to a full communication equilibrium in
the sense of \cite{Radner1979}.

We assume, without loss of generality, that the true value of the
asset is zero, and, in analogy to \cite{BravHeaton02}, we calculate
the expected performance of all agents. For notational convenience we
focus on agent 1, but it is clear that the extension of our analysis
to the generic agent $i$, $i\in\{1,\ldots,n\}$, is just a matter of
relabeling.  The expected trading payoff of agent 1 is given by
\[
  \mu_1 := \E\bigl[ (0-p)\sgn(\xi_1-p) \bigr]
  = -\E\bigl[ p\,\sgn(\xi_1-p) \bigr],
\]
where $\sgn: x \mapsto 1_{\left]0,+\infty\right[}(x) -
1_{\left]-\infty,0\right[}(x)$ stands for the signum function. In
particular, if the forecast $\xi_1$ is above the market price $p$,
then agent $1$ takes a long position with a payoff equal to $(0-p)$,
otherwise, if his forecast $\xi_1$ is below the market price $p$, his
payoff is equal to $(p-0)$.

\medskip

As a first step, we determine the joint distribution of $\xi_1$ and
$p$.
\begin{lemma}     \label{lm:S}
  Let $Q_1$ denote the first row of the matrix $Q$. One has $(\xi_1,p)
  \sim N_{\erre^2}(0,S)$, where
  \[
  S =
  \begin{bmatrix}
    q_{11} & \ip{Q_1}{w}\\
    \ip{Q_1}{w} & \ip{Qw}{w}
  \end{bmatrix}
  \]
  and $\det S>0$. In particular, one has $(\xi_1,p) = (aX+bY,cY)$ in
  law, where $(X,Y) \sim N_{\erre^2}(0,I)$ and
  \begin{equation*}
  a := \sqrt{q_{11}-b^2} >0, \qquad 
  b := \frac{\ip{Q_1}{w}}{\sqrt{\ip{Qw}{w}}}, \qquad
  c := \sqrt{\ip{Qw}{w}} >0.
  \end{equation*}
\end{lemma}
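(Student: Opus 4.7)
The plan is to obtain joint Gaussianity by linearity, compute $S$ entrywise from the covariance structure of $\xi$, verify the non-degeneracy condition $\det S > 0$ using the hypothesis on $w$, and then read off the representation from a one-dimensional Gaussian regression of $\xi_1$ on $p$. Since $(\xi_1,p)$ is the image of the centered jointly Gaussian vector $\xi$ under the linear map $\xi \mapsto (\xi_1, \ip{w}{\xi})$, it is itself centered jointly Gaussian, so its distribution is determined by the covariance. The entries of $S$ follow by direct computation: $\E[\xi_1^2] = q_{11}$, $\E[\xi_1 p] = \sum_{i=1}^n w_i\,\E[\xi_1 \xi_i] = \ip{Q_1}{w}$, and $\E[p^2] = \ip{Qw}{w}$.

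Next I would establish $\det S > 0$ by showing that $\xi_1$ and $p$ are linearly independent in $L^2$. Suppose $\alpha \xi_1 + \beta p = 0$ almost surely; rearranging gives $(\alpha + \beta w_1)\xi_1 + \sum_{j \neq 1} \beta w_j \xi_j = 0$. Nonsingularity of $Q$ implies that $\xi_1,\ldots,\xi_n$ are linearly independent in $L^2$, so $\beta w_j = 0$ for every $j \neq 1$. The hypothesis that at least two coordinates of $w$ are nonzero, together with $\sum_i w_i = 1$, forces some $w_j \neq 0$ with $j \neq 1$; hence $\beta = 0$ and then $\alpha = 0$.

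Finally, with $c = \sqrt{\ip{Qw}{w}} > 0$ one can write $p = cY$ in law for a standard Gaussian $Y$. The Gaussian regression $\xi_1 = \rho p + \varepsilon$ with $\rho = \ip{Q_1}{w}/\ip{Qw}{w}$ yields a centered Gaussian residual $\varepsilon$ independent of $p$ and of variance $q_{11} - \ip{Q_1}{w}^2/\ip{Qw}{w} = q_{11} - b^2$. Representing $\varepsilon = aX$ for an independent standard Gaussian $X$ and observing that $\rho c = b$ gives $(\xi_1,p) = (aX + bY, cY)$ in law, with $a > 0$ owing to $\det S = c^2(q_{11} - b^2) > 0$. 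The only step requiring any real care is the linear independence argument, which is precisely where the structural hypothesis on $w$ is used.
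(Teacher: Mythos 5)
Your proposal is correct and follows essentially the same route as the paper: joint Gaussianity by linearity, entrywise computation of $S$, non-degeneracy from the structural hypothesis on $w$ together with the non-singularity of $Q$, and a triangular factorization of $S$ for the representation $(aX+bY,cY)$. Your Gaussian-regression decomposition $\xi_1=\rho p+\varepsilon$ is just the probabilistic reading of the Cholesky factorization $S=BB'$ used in the paper, so the two arguments coincide in substance.
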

\begin{proof}
  Since $(\xi_1,p) = A\xi$, where $A: \R^n \to \R^2$ is the linear map
  represented by the matrix (which we denote by the same symbol, with
  an innocuous abuse of notation)
  \begin{equation}     \label{eq:map}
  A=
  \begin{bmatrix}
  1 & 0 & \cdots & 0 \\
  w_1 & w_2 & \cdots & w_n  
  \end{bmatrix},
  \end{equation}
  well-known results on Gaussian laws imply that $(\xi_1,p) \sim
  N_{\R^2}(0,S)$, where
  \begin{equation}
    S=AQA'=
    \begin{bmatrix}
        q_{11} & \sum_{j=1}^n w_j q_{1j}\\
        \sum_{i=1}^n w_i q_{i1} & \sum_{i,j=1}^n w_iw_jq_{ij}
    \end{bmatrix}
    =
    \begin{bmatrix}
      q_{11} & \ip{Q_1}{w}\\
      \ip{Q_1}{w} & \ip{Qw}{w}
    \end{bmatrix}.
    \label{eq:cov1}
  \end{equation}
  Note that, due to our assumptions on $w$, $A$ has full rank, hence
  $S$ is non-singular. In particular, there exists an upper-triangular
  matrix $B$ of the type
  \begin{equation}
    B=
    \begin{bmatrix}
      a & b\\
      0 & c
    \end{bmatrix}
  \end{equation}
  such that $S=BB'$. Elementary computations show that one has
  \[
  a = \sqrt{q_{11}-b^2}, \qquad 
  b = \frac{\ip{Q_1}{w}}{\sqrt{\ip{Qw}{w}}}, \qquad
  c = \sqrt{\ip{Qw}{w}}.
  \]
  Note that $a$ is well defined (and strictly positive) because 
  \[
  q_{11} - b^2 = \frac{q_{11}\ip{Qw}{w}-\ip{Q_1}{w}^2}{\ip{Qw}{w}}
  = \frac{\det S}{\ip{Qw}{w}},
  \]
  where $\det S>0$ since, as remarked above, $S$ is non-singular, and
  $\ip{Qw}{w}>0$ because $Q$ is strictly positive definite and $w \neq
  0$.  Let $Z:=(X,Y) \sim N_{\R^2}(0,I)$. Then the law of the random
  vector $BZ=(aX+bY,cY)$ is $N(0,BB')=N(0,S)$, i.e. it coincides with
  the law of $(\xi_1,p)$.
\end{proof}
Writing
\[
S =
\begin{bmatrix}
  \operatorname{Var} \xi_1 & \operatorname{Cov}(\xi_1,p)\\
\operatorname{Cov}(\xi_1,p) & \operatorname{Var} p 
\end{bmatrix}
=:
\begin{bmatrix}
  \sigma^2_1 & \sigma_{1p}\\
  \sigma_{1p} & \sigma^2_p
\end{bmatrix},
\]
it is immediately seen that the following identities hold:
\[
a = \sqrt{\sigma^2_1-\left(\frac{\sigma_{1p}}{\sigma_p}\right)^2},
\qquad
b = \frac{\sigma_{1p}}{\sigma_p},
\qquad 
c = \sigma_p.
\]
As the main result of this section, in the following we derive a
closed-form expression for the expected payoff of the single agent.
\begin{prop}     \label{prop:22}
  Let $a$, $b$, $c$ be defined as in Lemma \ref{lm:S}, and
  $\beta:=(c-b)/a$. One has
  \[
  \mu_1 = -\E\bigl[ p \sgn(\xi_1-p) \bigr] =
  \frac{2c\beta}{\sqrt{2\pi} \sqrt{1+\beta^2}}.
  \]
\end{prop}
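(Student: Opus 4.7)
My plan is to reduce the expectation to two independent standard normals using Lemma \ref{lm:S}, and then perform an orthogonal change of variables so that the sign function depends on a single coordinate.

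First, I would invoke Lemma \ref{lm:S} to replace $(\xi_1,p)$ in law by $(aX+bY,cY)$ with $(X,Y) \sim N_{\R^2}(0,I)$. Writing
\[
\xi_1 - p = aX + (b-c)Y = a(X - \beta Y),
\]
and using $a>0$, we get $\sgn(\xi_1-p) = \sgn(X-\beta Y)$, so the quantity to compute becomes
\[
\mu_1 = -c\,\E\bigl[ Y \sgn(X-\beta Y) \bigr].
\]

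The key step is the rotation
\[
U := \frac{X-\beta Y}{\sqrt{1+\beta^2}}, \qquad V := \frac{\beta X + Y}{\sqrt{1+\beta^2}},
\]
which, being orthogonal, preserves the law and gives $(U,V) \sim N_{\R^2}(0,I)$. Solving for $Y$ yields $Y = (-\beta U + V)/\sqrt{1+\beta^2}$, while $\sgn(X-\beta Y) = \sgn(U)$. Substituting and splitting the expectation,
\[
\E\bigl[Y\sgn(X-\beta Y)\bigr]
= \frac{-\beta\,\E[U\sgn U] + \E[V\sgn U]}{\sqrt{1+\beta^2}}
= \frac{-\beta\sqrt{2/\pi}}{\sqrt{1+\beta^2}},
\]
since $V$ is independent of $U$ with $\E V = 0$, and $\E|U| = \sqrt{2/\pi}$. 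Multiplying by $-c$ gives the claimed formula.

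I do not anticipate a real obstacle; the only nontrivial move is spotting the rotation that decouples the sign function from $Y$. An alternative route---conditioning on $Y$ to obtain $\E[\sgn(X-\beta Y)\mid Y] = 1 - 2\Phi(\beta Y)$, then integrating $\E[Y\Phi(\beta Y)]$ by parts against the Gaussian density---works equally well and produces the same constant $\beta/\sqrt{2\pi(1+\beta^2)}$.
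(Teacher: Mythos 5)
Your proof is correct, and it takes a genuinely different route from the paper. The paper conditions on $Y$, uses $\E[\sgn(X-\beta y)]=1-2\Phi(\beta y)$ to reduce the problem to $\int_\R \Phi(\beta y)\,y\,\phi(y)\,dy$, and then evaluates that integral by parts (using $\phi'(y)=-y\phi(y)$) followed by an explicit Gaussian integral; this is exactly the ``alternative route'' you sketch in your last paragraph. Your main argument instead applies the orthogonal rotation $U=(X-\beta Y)/\sqrt{1+\beta^2}$, $V=(\beta X+Y)/\sqrt{1+\beta^2}$, so that the sign depends on $U$ alone, and then finishes with $\E[V\sgn U]=0$ by independence and $\E[U\sgn U]=\E\lvert U\rvert=\sqrt{2/\pi}$; the algebra checks out (the inverse rotation does give $Y=(-\beta U+V)/\sqrt{1+\beta^2}$, and $\sqrt{2/\pi}=2/\sqrt{2\pi}$ reconciles your constant with the stated one). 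Your rotation argument is arguably cleaner for this single computation, as it avoids both the integration by parts and any explicit integral. The paper's conditioning approach, on the other hand, is the one that scales to the rest of the paper: the same identity $\E[\sgn(X-\beta y)]=1-2\Phi(\beta y)$ drives the computation of all odd moments $\E M^{2k+1}$ via a recursion in Section \ref{s:mom}, and the decomposition by the value of $\sgn(\xi_1-p)$ underlies the derivation of the density of the payoff in Section \ref{s:joint}.
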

\begin{proof}
  Taking into account that $a>0$, we can write
  \begin{equation*}
    \mu_1=-\E \bigl[ p\,\sgn(\xi_1-p) \bigr] 
    = -c\E \bigl[ Y \sgn(aX+(b-c)Y) \bigr]
    = -c \E \bigl[ Y \sgn(X - \beta Y) \bigr].
  \end{equation*}
  The independence of $X$ and $Y$ yields
  \begin{equation*}
    \E\bigl[ Y \sgn(X - \beta Y) \bigr] =
    \E\E\bigl[Y \sgn(X - \beta Y) \big| Y \bigr] =
    \int_\erre y \E\bigl[\sgn(X - \beta y)\bigr] \phi(y)\,dy.
  \end{equation*}
  Observing that one has
  \begin{align*}
    \E\bigl[\sgn(X - \beta y)\bigr] &= 
    \E 1_{\{X>\beta y\}} - \E 1_{\{X<\beta y\}}
    = \Pb(X>\beta y) - \Pb(X<\beta y)\\
    &= 1 - 2\Pb(X<\beta y) = 1 - 2\Phi(\beta y),
  \end{align*}
  we get, recalling that $\int_\R y\phi(y)\,dy=0$,
  \begin{equation*}
    \E\bigl[ Y \sgn(X - \beta Y) \bigr] =
    \int_\R y\bigl( 1-2\Phi(\beta y) \bigr) \phi(y)\,dy =
    -2 \int_\R \Phi(\beta y) \, y \, \phi(y)\,dy.
  \end{equation*}
  We thus have
  \begin{equation*}
    \mu_1 = -c \E\bigl[ Y \sgn(X-\beta Y) \bigr] =
    2c \int_\R \Phi(\beta y) y \phi(y)\,dy.
  \end{equation*}
  Since $\phi'(y)=-y\phi(y)$ and $\Phi'(\beta y)=\beta \phi(\beta y)$,
  integration by parts yields, taking into account that $\Phi$ is
  bounded and $\phi$ is rapidly decreasing at infinity,
  \[
  \mu_1 = 2 c \beta \int_\R \phi(\beta y) \, \phi(y)\,dy 
  = \frac{2c\beta}{\sqrt{2\pi}} \int_\R
  \frac{1}{\sqrt{2\pi}}
  \exp\left( -(1+\beta^2)y^2/2 \right)\,dy.
  \]
  By the change of variable $y=x/\sqrt{1+\beta^2}$, one finally
  obtains
  \[
  \mu_1  = \frac{2c\beta}{\sqrt{2\pi}\sqrt{1+\beta^2}}
  \int_\R \frac{1}{\sqrt{2\pi}} e^{-x^2/2}\,dx
  = \frac{2c\beta}{\sqrt{2\pi} \sqrt{1+\beta^2}}.
  \qedhere
  \]
\end{proof}
\begin{rmk}
  As shown in Section \ref{s:joint} below, it is possible to provide a
  complete probabilistic characterization of the trading profit as a
  random variable, determining its density in closed form.
\end{rmk}
Recalling the definitions of $a$, $b$, $c$ and $\beta$, the
expected payoff $\mu_1$ can also be written as
\begin{equation}\label{eq:expgain}
  \mu_1 = \sqrt{2/\pi} \,
  \frac{\sigma^2_p-\sigma_{1p}}{\sqrt{\sigma^2_1+\sigma^2_p-2\sigma_{1p}}}
  = \sqrt{2/\pi} \, 
  \frac{\sigma^2_p-\sigma_1\sigma_p\rho_{1p}}%
       {\sqrt{\sigma^2_1+\sigma^2_p-2\sigma_1\sigma_p\rho_{1p}}},
\end{equation}
where $\left]-1,1\right[ \ni \rho_{1p} := \sigma_{1p}/(\sigma_1
\sigma_p)$ denotes the correlation between $\xi_1$ and $p$.  Compared
to previous (numerical) simulation studies, this closed-form solution
allows for a thorough comparative static analysis. As can be seen from
\eqref{eq:expgain}, $(\sigma_1,\sigma_r) \mapsto
\mu_1(\sigma_1,\sigma_r)$ is homogeneous of order $1$. Therefore,
without loss of generality, in Figure \ref{f:eg} we set $\sigma_p=1$
and show the effect of different levels of $\sigma_1$ and $\rho$.
\begin{figure}
\label{f:eg}
\begin{center}
\includegraphics[scale=.35]{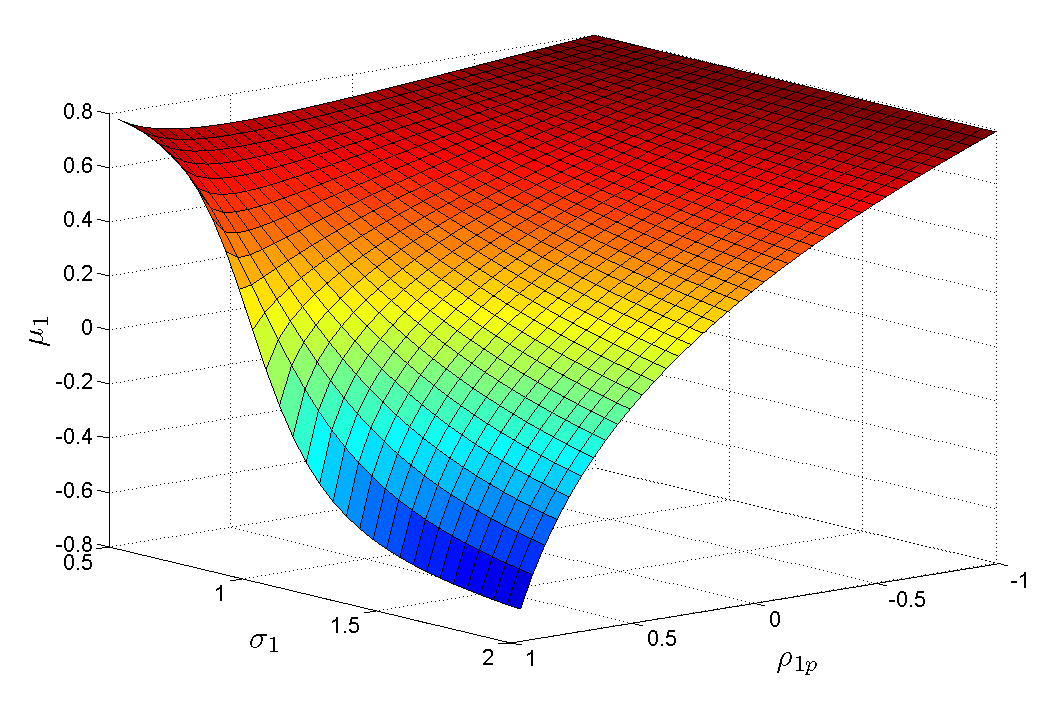}
\end{center}
\caption{Expected gains for agent 1 given $\sigma_p=1$}
\end{figure}

\subsection{Sensitivity analysis}\label{s:imp}
Let us introduce the random variable $r:=p - w_1\xi_1 = \sum_{k=2}^n
w_k\xi_k$, and define
\[
\sigma_r^2 := \operatorname{Var}(r), \qquad
\rho_{1r} := \frac{1}{\sigma_1\sigma_r} \operatorname{Cov}(\xi_1,r).
\]
In this section we analyze the impact on the expected payoff $\mu_1$
due to variations in the parameters $\sigma_1$, $\sigma_r$, and
$\rho_{1r}$. Note that one has
\begin{align*}
  \sigma_p^2 &= \operatorname{Var}(w_1\xi_1 + r) 
  = w_1^2\sigma_1^2 + \sigma_r^2 + 2w_1\sigma_{1r}
  = w_1^2\sigma_1^2 + \sigma_r^2 + 2w_1\sigma_1\sigma_r\rho_{1r},\\
  \sigma_{1p} &= \operatorname{Cov}(\xi_1,w_1\xi_1+r)
  = w_1 \sigma_1^2 + \sigma_{1r}
  = w_1 \sigma_1^2 + \sigma_1\sigma_r\rho_{1r},
\end{align*}
which implies, by \eqref{eq:expgain},
\begin{equation}     \label{eq:mu-alt}
  \mu_1 = \mu_1(\sigma_1,\sigma_r,\rho_{1r}) = 
  - \sqrt{2/\pi} \; 
  \frac{w_1(1-w_1)\sigma_1^2 - \sigma_r^2 + (1-2w_1)\rho_{1r}\sigma_r\sigma_1}%
  {\sqrt{(w_1-1)^2\sigma_1^2+\sigma_r^2-2(1-w_1)\rho_{1r}\sigma_r\sigma_1}}.
\end{equation}
\begin{prop}\label{p:sigma1}
  The following properties of the function $\sigma_1 \mapsto
  \mu_1(\sigma_1,\sigma_r,\rho_{1r})$ hold:
  \begin{itemize}
  \item[\emph{(a)}] it is strictly decreasing if $\rho_{1r} \leq 0$;
  \item[\emph{(b)}] if $\rho_{1r} > 0$, there exists $\bar{\sigma}_1>0$ such
    that it is strictly increasing on $\left]0,\bar{\sigma}_1\right]$
    and strictly decreasing on $\left[\bar{\sigma}_1,\infty\right[$.
  \end{itemize}
\end{prop}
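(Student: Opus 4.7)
The plan is to differentiate $\mu_1$ with respect to $\sigma_1$ using the explicit expression \eqref{eq:mu-alt}, and to reduce the sign analysis of the derivative to that of a single cubic polynomial. Writing $f(\sigma_1)$ for the numerator of $-\mu_1\sqrt{\pi/2}$ and $g(\sigma_1)$ for the square of the denominator in \eqref{eq:mu-alt}, the quotient rule gives
\[
\mu_1'(\sigma_1) \;=\; -\sqrt{2/\pi}\,\frac{2f'(\sigma_1)g(\sigma_1)-f(\sigma_1)g'(\sigma_1)}{2\,g(\sigma_1)^{3/2}}.
\]
Since $g>0$, the sign of $\mu_1'$ is opposite to that of $P(\sigma_1):=2f'g-fg'$, which is a cubic in $\sigma_1$ whose coefficients depend on $w_1$, $\sigma_r$, and $\rho_{1r}$. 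Throughout I work under the nondegenerate assumption $0<w_1<1$.

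The key step, and also the main obstacle, is to prove that $P$ is strictly increasing on all of $\erre$, independently of the value of $\rho_{1r}\in\left]-1,1\right[$. Because $P'$ is a quadratic in $\sigma_1$ with positive leading coefficient $6w_1(1-w_1)^3$, this amounts to showing that its discriminant is strictly negative. After a direct (but somewhat tedious) expansion, the discriminant simplifies to a compact expression proportional to $\rho_{1r}^2-1$, which is strictly negative because $|\rho_{1r}|<1$. This is precisely the place where the admissibility condition $|\rho_{1r}|<1$ enters essentially; without it, $P$ could fail to be monotone and the clean bifurcation between cases (a) and (b) would break down.

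With strict monotonicity of $P$ in hand, both claims follow by inspecting $P$ at the boundaries. A short direct computation yields $P(0)=-2w_1\rho_{1r}\sigma_r^3$. For (a), if $\rho_{1r}\leq 0$ then $P(0)\geq 0$, so strict monotonicity of $P$ forces $P(\sigma_1)>0$ for every $\sigma_1>0$ and hence $\mu_1'(\sigma_1)<0$ throughout. For (b), if $\rho_{1r}>0$ then $P(0)<0$, while the leading coefficient $2w_1(1-w_1)^3$ of $P$ is positive, so $P(\sigma_1)\to+\infty$ as $\sigma_1\to\infty$; combined with the strict monotonicity of $P$, there exists a unique $\bar\sigma_1>0$ at which $P$ vanishes, and the sign of $P$, hence of $\mu_1'$, produces the strict monotonic behavior on $\left]0,\bar\sigma_1\right]$ and $\left[\bar\sigma_1,\infty\right[$ described in the statement.
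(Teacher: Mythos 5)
Your proposal is correct and follows essentially the same route as the paper: reduce the sign of $\partial\mu_1/\partial\sigma_1$ to that of a cubic in $\sigma_1$, show that cubic is strictly increasing because the discriminant of its derivative is a positive multiple of $\rho_{1r}^2-1<0$, and then read off the sign from the value at $0$ and the behaviour at infinity. The only (immaterial) difference is in part (a), where the paper observes that all coefficients of the cubic are nonnegative when $\rho_{1r}\leq 0$, whereas you reuse the monotonicity of the cubic together with $P(0)\geq 0$; both arguments are valid.
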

\begin{proof} 
  (a) Let us write
  \[
  \mu_1
  = - \sqrt{2/\pi} \; \frac{a\sigma_1^2+b\sigma_1-c}%
  {\sqrt{\alpha\sigma_1^2-\beta\sigma_1+c}},
  \]
  where
  \begin{align*}
    a &:= w_1(1-w_1) > 0, &    b &:= (1-2w_1)\rho_{1r}\sigma_r,\\
    c &:= \sigma_r^2 > 0, &    \alpha &:= (1-w_1)^2 > 0,\\
    \beta &:= 2(1-w_1)\rho_{1r}\sigma_r.
  \end{align*}
  After a few tedious but straightforward calculations, one gets
  \[
  \frac{\partial\mu_1}{\partial\sigma_1}
  = - \sqrt{2/\pi} \bigl( \alpha\sigma_1^2-\beta\sigma_1+c \bigr)^{-3/2} 
  \, g(\sigma_1),
  \]
  where
  \begin{align*}
    g(x) &= a \alpha x^3
    - \frac32 a \beta x^2
    + \bigl( (2a+\alpha) c - b \beta/2 \big)x
    + (b-\beta/2)c\\
    &=: a_3x^3 + a_2x^2 + a_1x + a_0
  \end{align*}
  and
  \begin{align*}
    a_3 &= w_1(1-w_1)^3 >0,\\
    a_2 &= -3w_1(1-w_1)^2\sigma_r \rho_{1r},\\
    a_1 &= \sigma_r^2 (1-w_1)
    \bigl( (1+2\rho_{1r}^2)w_1 + 1-\rho_{1r}^2 \bigr) >0,\\
    a_0 &= -w_1 \sigma_r^3 \rho_{1r}.
  \end{align*}
  It is thus immediately seen that, if $\rho_{1r} \leq 0$, then $g(x) > 0$
  for all $x \geq 0$, which proves (a).
  \smallskip\par\noindent
  (b) We are going to show that, independently of the value of
  $\rho_{1r}>0$, the function $g$ is strictly increasing. In fact, one
  has $g'(x)=3a_3x^2 + 2a_2x + a_1$, whose discriminant $a_2^2 -
  3a_1a_3$ has the same sign of $(1+w_1)(\rho_{1r}^2-1)$, which is
  clearly negative. Therefore, $g'(x)>0$ for all $x \geq 0$, hence $g$
  is strictly increasing. In particular, since $g(0)=a_0<0$ and
  $\lim_{x\to+\infty} g(x)=+\infty$, it follows that $g$ has only one
  positive root $\bar{\sigma}_1$, with $g(x)<0$ for all $x \in
  \left[0,\bar{\sigma}_1\right[$ and $g(x)>0$ for all $x \in
  \left]\bar{\sigma}_1,+\infty\right[$, which proves (b).
\end{proof}
Therefore, if the signal is negatively correlated with the aggregated
signal of the others, then --~ceteris paribus~-- a reduction in the
volatility of the signal always increases the expected payoff. Our
model is therefore at least partially consistent with empirical
results reported in \cite{Mikhail1999}, where single market
participants have an incentive to acquire and process information in
order to improve their forecast ability. On the other hand, it should
be stressed that an improvement of the precision of the signal (i.e. a
reduction of $\sigma_1$) does not always imply a higher expected
payoff. In particular, if the signal of agent $1$ is positively
correlated with the aggregated signal of all other agents, the effect
of a decrease in $\sigma_1$ on $\mu_1$ depends, in a complex way, on
all parameters of the model.  We shall see that an analogous
non-monotonic relationship holds with respect to variations in
$\sigma_r$.

\begin{prop}\label{p:sigmar}
  The following properties of the function $\sigma_r \mapsto
  \mu_1(\sigma_1,\sigma_r,\rho_{1r})$ hold:
  \begin{itemize}
  \item[\emph{(a)}] it is strictly increasing if $\rho_{1r} \leq 0$;
  \item[\emph{(b)}] if $\rho_{1r} > 0$, there exists $\bar{\sigma}_r>0$ such
    that it is strictly decreasing on $\left]0,\bar{\sigma}_r\right]$
    and strictly increasing on $\left[\bar{\sigma}_r,\infty\right[$.
  \end{itemize}
\end{prop}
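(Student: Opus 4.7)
My plan is to proceed in complete analogy with the proof of Proposition~\ref{p:sigma1}, this time differentiating \eqref{eq:mu-alt} with respect to $\sigma_r$ rather than $\sigma_1$. First I would write $\mu_1 = -\sqrt{2/\pi}\, f(\sigma_r)/\sqrt{g(\sigma_r)}$ with
\[
f(x) = -x^2 + (1-2w_1)\rho_{1r}\sigma_1\, x + w_1(1-w_1)\sigma_1^2, \qquad
g(x) = x^2 - 2(1-w_1)\rho_{1r}\sigma_1\, x + (1-w_1)^2\sigma_1^2,
\]
observing that $g(\sigma_r) = \operatorname{Var}(\xi_1-p) > 0$, so the square root is harmless. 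A direct differentiation then yields
\[
\frac{\partial\mu_1}{\partial\sigma_r} = -\sqrt{2/\pi}\, \frac{h(\sigma_r)}{2\,g(\sigma_r)^{3/2}}, \qquad h(x) := 2f'(x)g(x) - f(x)g'(x),
\]
so that the sign of $\partial\mu_1/\partial\sigma_r$ is opposite to that of $h(\sigma_r)$.

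The next step is to expand $h$ into an explicit cubic polynomial in $x=\sigma_r$. Setting $A := 1-w_1>0$, $s:=\sigma_1$ and $r:=\rho_{1r}$ for brevity, a tedious but straightforward computation gives
\[
h(x) = -2x^3 + 6Ars\, x^2 - 2As^2\bigl[2-w_1+r^2(1-2w_1)\bigr] x + 2A^3 r s^3.
\]
The leading coefficient is $-2<0$; the constant term $2A^3 r s^3$ has the sign of $r$; and the bracket $2-w_1+r^2(1-2w_1)$ is strictly positive for every admissible $(w_1,r)$, since for $w_1\le 1/2$ both summands are nonnegative with $2-w_1>0$, while for $w_1>1/2$ one has $2-w_1+r^2(1-2w_1) > 2-w_1-(2w_1-1) = 3(1-w_1)>0$ because $r^2<1$. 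For part~(a), $r\le 0$ makes every coefficient of $h$ nonpositive, with the cubic and linear ones strictly negative, so $h(x)<0$ for each $x>0$, whence $\partial\mu_1/\partial\sigma_r>0$.

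For part~(b), with $r>0$, my plan is to show that $h$ is strictly decreasing on $[0,\infty)$; together with $h(0) = 2A^3 r s^3 > 0$ and $h(x)\to -\infty$ as $x\to+\infty$, this produces a unique positive root $\bar\sigma_r$ of $h$ with $h>0$ on $[0,\bar\sigma_r)$ and $h<0$ on $(\bar\sigma_r,+\infty)$, which is precisely the monotonicity pattern of $\mu_1$ claimed in~(b). To see that $h$ is monotone, I would examine $-h'(x)/2 = 3x^2 - 6Ars\, x + As^2[2-w_1+r^2(1-2w_1)]$ and compute its discriminant; the tidy factorisation
\[
36A^2r^2s^2 - 12As^2\bigl[2-w_1+r^2(1-2w_1)\bigr] = 12As^2(2-w_1)(r^2-1)
\]
is negative because $r^2<1$, so $-h'/2$ is a positive-leading parabola with no real roots, hence strictly positive, and therefore $h'<0$ throughout. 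I expect the main obstacle to be precisely this algebraic simplification of the discriminant into a manifestly signed product; the remainder is sign tracking in the spirit of Proposition~\ref{p:sigma1}.
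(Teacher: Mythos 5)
Your proof is correct and follows essentially the same route as the paper: both reduce the sign of $\partial\mu_1/\partial\sigma_r$ to that of a cubic in $\sigma_r$ (yours is $-2$ times the paper's) and establish its strict monotonicity via the same discriminant factorisation $12(1-w_1)\sigma_1^2(2-w_1)(\rho_{1r}^2-1)<0$, finishing with the same sign check at $\sigma_r=0$. The only difference is that the paper obtains the cubic with less computation, exploiting the degree-one homogeneity of $\mu_1$ through Euler's identity $\sigma_1\,\partial\mu_1/\partial\sigma_1+\sigma_r\,\partial\mu_1/\partial\sigma_r=\mu_1$ and recycling the derivative already computed for Proposition~\ref{p:sigma1}, whereas you differentiate \eqref{eq:mu-alt} directly with the quotient rule.
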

\begin{proof}
  Since $(\sigma_1,\sigma_r) \mapsto \mu_1(\sigma_1,\sigma_r)$ is
  homogeneous of order $1$, Euler's theorem yields
  \begin{equation}     \label{eq:eule}
    \sigma_1 \frac{\partial\mu_1}{\partial\sigma_1}(\sigma_1,\sigma_r)
  + \sigma_r \frac{\partial\mu_1}{\partial\sigma_r}(\sigma_1,\sigma_r)
  = \mu_1(\sigma_1,\sigma_r),
  \end{equation}
  hence
  \begin{align*}
  \frac{\partial\mu_1}{\partial\sigma_r} &= \frac{1}{\sigma_r}
  \Bigl( \mu_1 - \sigma_1 \frac{\partial\mu_1}{\partial\sigma_1} \Bigr)\\
  &= \frac{\sqrt{2/\pi}}{\sigma_r}
     \bigl( \alpha\sigma_1^2-\beta\sigma_1+c \bigr)^{-3/2}
     \Bigl( \sigma_1 g(\sigma_1) - \bigl( a\sigma_1^2 + b\sigma_1 - c\bigr)
     \bigl( \alpha\sigma_1^2 - \beta\sigma_1 + c \bigl) \Bigr)\\
  &= \frac{\sqrt{2/\pi}}{\sigma_r}
     \bigl( \alpha\sigma_1^2-\beta\sigma_1+c \bigr)^{-3/2} \Bigl(
     (-a\beta/2-b\alpha) \sigma_1^3
     + \bigl( c(a+2\alpha) + b\beta/2 \bigr) \sigma_1^2 +\\
  &\hspace{3em} -\frac32 c\beta \sigma_1 + c^2 \Bigr)\\
  &= \sqrt{2/\pi} \bigl( \alpha\sigma_1^2-\beta\sigma_1+c \bigr)^{-3/2}
     \bigl( \sigma_r^3 + a_1 \sigma_1\sigma_r^2 + a_2\sigma_1^2\sigma_r 
     + a_3 \sigma_1^3 \bigr)\\
  &= \sqrt{2/\pi} \bigl( \alpha\sigma_1^2-\beta\sigma_1+c \bigr)^{-3/2}
    h(\sigma_r),
  \end{align*}
  where $h(x):=x^3 + a_1\sigma_1x^2 + a_2\sigma_1^2x + a_3\sigma_1^3$
  and
  \begin{align*}
    a_3 &:= \frac{-a\beta/2-b\alpha}{\sigma_r} = -(1-w_1)^3 \rho_{1r},\\
    a_2 &:= \frac{c(a+2\alpha) + b\beta/2}{\sigma_r^2}
          = w_1^2(1+2\rho_{1r}^2) - 3w_1(1+\rho_{1r}^2)
             +2+\rho_{1r}^2,\\
    a_1 &:= \frac{-3c\beta/2}{\sigma_r^3} = -3(1-w_1) \rho_{1r}.
  \end{align*}
  Let us first show that $a_2>0$: in fact, looking at the above
  definition of $a_2$ as a polynomial in $w_1$, its roots are
  \[
  \frac{3(1+\rho_{1r}^2) \pm (1-\rho_{1r}^2)}{2+4\rho_{1r}^2} \geq 1.
  \]
  (a) Since $a_1, a_2, a_3>0$, Descartes' rule of signs implies that
  $h$ has no positive roots. Moreover, since $h(0)=\sigma_1^3a_3>0$,
  we conclude that $h(x)>0$ for all $x>0$,
  i.e. $\partial\mu_1/\partial\sigma_r>0$.
  \smallskip\par\noindent
  (b) Note that $h$ is strictly increasing: in fact, one has $h'(x) =
  3x^2 +2a_1\sigma_1 x + a_2\sigma_1^2$, and the discriminant of this
  polynomial is proportional to
  \[
  a_1^2-3a_2 = 3(1-\rho_{1r}^2)(1-w_1)(w_1-2)<0.
  \]
  Since $a_3<0$ implies that $h(0)<0$, we conclude that $h$ admits
  exactly one positive root $\bar{\sigma}_r$, as well as that $h$ is
  negative on $[0,\bar{\sigma}_r]$ and positive on
  $\left]\bar{\sigma}_r,+\infty\right[$.
\end{proof}

Note that the first statement of the previous Proposition simply says
that, if $\rho_{1r} \leq 0$, the expected payoff $\mu_1$ improves as
$\sigma_r$ increases, i.e. agent $1$ obtains a higher payoff as the
relative accuracy of his own signal improves. On the other hand, if
$\rho_{1r} > 0$, in analogy to Proposition \ref{p:sigma1}, the
relationship between $\mu_1$ and $\sigma_r$ is no longer monotonic and
is determined in a complex way by all parameters of the model.  This
$J$-shaped relationship between forecast precision and trading
profitability was first reported in simulation studies by
\cite{Schredelseker1984, Schredelseker2001} and also documented in
experimental works by \cite{huber2007} and \cite{Huber2008}. The
authors use a cumulative information setting which implies a positive
pairwise correlation.  In the numerical example of Section~\ref{s:num}
we adopt this setting, thus reproducing the behavior observed in the
above mentioned papers.

It should be remarked that, even though Propositions \ref{p:sigma1}
and \ref{p:sigmar} are qualitatively very similar, it is not true in
general that $\partial\mu_1/\partial\sigma_1(\sigma_1,\sigma_r)>0$
implies $\partial\mu_1/\partial\sigma_r(\sigma_1,\sigma_r)<0$ for all
$\sigma_1$, $\sigma_r$, as it can be immediately realized looking at
\eqref{eq:eule}. More precise information on the values of the
parameters determining the signs of $\mu_1$ and of its partial
derivatives can be easily obtained.

\begin{prop}
  Let $\sigma_1$, $\sigma_r$ be fixed positive constants, and
  $\alpha_+=\alpha_+(\rho_{1r})$ be defined as in \eqref{eq:alpha}
  below. The function $\rho_{1r} \mapsto
  \mu_1(\sigma_1,\sigma_r,\rho_{1r})$ is locally decreasing if
  $\sigma_r < \alpha_+ \sigma_1$ and locally increasing if $\sigma_r >
  \alpha_+ \sigma_1$.
\end{prop}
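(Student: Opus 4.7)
The plan is to differentiate the explicit formula \eqref{eq:mu-alt} for $\mu_1$ with respect to $\rho_{1r}$ and reduce the sign analysis of the resulting expression to the study of a single quadratic whose positive root is precisely $\alpha_+$. Writing
\[
\mu_1 = -\sqrt{2/\pi}\;\frac{N(\rho_{1r})}{D(\rho_{1r})},
\qquad
N := w_1(1-w_1)\sigma_1^2 - \sigma_r^2 + (1-2w_1)\rho_{1r}\sigma_r\sigma_1,
\]
with $D$ the square root in the denominator, the first step is the routine quotient-rule computation
\[
\frac{\partial\mu_1}{\partial\rho_{1r}}
= -\sqrt{2/\pi}\;\frac{N' D - N D'}{D^2}.
\]
Both $N'$ and the derivative of $D^2$ (and hence $D'$) are affine in $\rho_{1r}$ and proportional to $\sigma_1\sigma_r$, so pulling these factors out one is left with
\[
\frac{\partial\mu_1}{\partial\rho_{1r}}
= -\sqrt{2/\pi}\,\frac{\sigma_1\sigma_r}{D^3}\,\Bigl[(1-2w_1)D^2 + (1-w_1)N\Bigr].
\]

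The next step is to expand the bracket. I expect the $\rho_{1r}$-terms present in $D^2$ and $N$ to simplify nicely: a short calculation gives
\[
(1-2w_1)D^2 + (1-w_1)N
= (1-w_1)^3\sigma_1^2 - w_1 \sigma_r^2 - (1-2w_1)(1-w_1)\rho_{1r}\sigma_r\sigma_1.
\]
Setting $\alpha := \sigma_r/\sigma_1$, the sign of this bracket coincides with the sign of
\[
P(\alpha) := (1-w_1)^3 - (1-2w_1)(1-w_1)\rho_{1r}\,\alpha - w_1 \alpha^2,
\]
a concave quadratic in $\alpha$ with $P(0)=(1-w_1)^3>0$ and $P(\alpha)\to-\infty$ as $\alpha\to+\infty$. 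Hence $P$ has exactly one positive root; this root is $\alpha_+$ (as defined in the upcoming \eqref{eq:alpha}), obtained from the usual quadratic formula with the $+$ sign in front of the discriminant to guarantee positivity. One then reads off that $P(\alpha)>0$ for $0<\alpha<\alpha_+$ and $P(\alpha)<0$ for $\alpha>\alpha_+$.

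Combining these sign observations with the overall minus sign in front of $\sqrt{2/\pi}\,\sigma_1\sigma_r/D^3>0$, one concludes that $\partial\mu_1/\partial\rho_{1r}<0$ exactly when $\sigma_r < \alpha_+\sigma_1$ and $\partial\mu_1/\partial\rho_{1r}>0$ exactly when $\sigma_r > \alpha_+\sigma_1$. The only step that needs a bit of care (rather than being purely mechanical) is verifying the factorization of $(1-2w_1)D^2+(1-w_1)N$ that makes the $\rho_{1r}$-dependence collapse to a single linear term, and confirming that the positive root of $P$ genuinely coincides with the expression $\alpha_+$ from \eqref{eq:alpha}; this is essentially algebraic bookkeeping but is where an arithmetic slip is most likely to occur.
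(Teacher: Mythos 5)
Your proposal is correct and follows essentially the same route as the paper: differentiate \eqref{eq:mu-alt} in $\rho_{1r}$ and reduce the sign of $\partial\mu_1/\partial\rho_{1r}$ to that of a quadratic in $\alpha=\sigma_r/\sigma_1$ whose unique positive root is $\alpha_+$ from \eqref{eq:alpha}. Your intermediate factorization $(1-2w_1)D^2+(1-w_1)N=(1-w_1)^3\sigma_1^2-w_1\sigma_r^2-(1-2w_1)(1-w_1)\rho_{1r}\sigma_r\sigma_1$ checks out and is just the negative of the bracket the paper writes down directly.
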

\begin{proof}
 One has, setting $\alpha:=\sigma_r/\sigma_1$,
 \begin{align*}
   \frac{\partial\mu_1}{\partial\rho_{1r}} &= \sqrt{2/\pi} \, A^{-3}
   \Bigl[ (1-w_1)(1-2w_1) \sigma_1^2 \sigma_r^2 \rho +
   w_1 \sigma_1 \sigma_r^3 -(1-w_1)^3\sigma_1^3 \sigma_r \Bigr]\\
   &= \sqrt{2/\pi} \, A^{-3} \sigma_1^3\sigma_r \Bigl[ w_1\alpha^2 +
   (1-w_1)(1-2w_1)\rho_{1r}\alpha - (1-w_1)^3 \Bigr],
 \end{align*}
 where $A$ denotes the denominator of the fraction appearing in
 \eqref{eq:mu-alt}.  Then the sign of
 $\partial\mu_1/\partial\rho_{1r}$ is equal to the sign of the
 polynomial in $\alpha$
 \[
 w_1\alpha^2 + (1-w_1)(1-2w_1)\rho_{1r}\alpha - (1-w_1)^3,
 \]
 whose roots $\alpha_-< 0 < \alpha_+$ are
 \begin{equation}     \label{eq:alpha}
 \alpha_\pm := (1-w_1)
 \frac{-(1-2w_1)\rho_{1r} \pm \sqrt{(1-2w_1)^2\rho_{1r}^2+4w_1(1-w_1)}}{2w_1}.
 \end{equation}
 In particular, if $\sigma_r/\sigma_1 < \alpha_+$, then
 $\partial\mu_1/\partial\rho_{1r}$ is negative; if $\sigma_r/\sigma_1 >
 \alpha_+$, then $\partial\mu_1/\partial\rho_{1r}$ is positive.
\end{proof}

\begin{coroll}
  Let $\sigma_1$, $\sigma_r>0$ be given, and define
  \begin{align*}
    \ell &:= \frac{1-w_1}{2w_1} \Bigl(
    \sqrt{(1-2w_1)^2+4w_1(1-w_1)} - \lvert 1-2w_1 \rvert \Bigr),\\
    u &:= \frac{1-w_1}{2w_1} \Bigl( \sqrt{(1-2w_1)^2+4w_1(1-w_1)} +
    \lvert 1-2w_1 \rvert \Bigr).
  \end{align*}
  If $\sigma_r/\sigma_1 \in [\ell,u]$, then
  \begin{itemize}
  \item[\emph{(a)}] if $w_1<1/2$, then there exists $\bar{\rho}$ such
    that $\rho_{1r} \mapsto \mu_1(\rho_{1r})$ is decreasing on $[-1,\bar{\rho}]$
    and increasing on $[\bar{\rho},1]$;
  \item[\emph{(b)}] if $w_1>1/2$, then there exists $\bar{\rho}$ such
    that $\rho_{1r} \mapsto \mu_1(\rho_{1r})$ is increasing on $[-1,\bar{\rho}]$
    and decreasing on $[\bar{\rho},1]$;
  \end{itemize}
  Otherwise, if $\sigma_r/\sigma_1 < \ell$, then $\rho_{1r} \mapsto
  \mu_1(\rho_{1r})$ is decreasing on $[-1,1]$; if $\sigma_r/\sigma_1 > u$,
  then $\rho_{1r} \mapsto \mu_1(\rho_{1r})$ is increasing on $[-1,1]$.
\end{coroll}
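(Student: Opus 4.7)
The plan is to read off everything from the polynomial identified in the previous Proposition. Writing $\alpha := \sigma_r/\sigma_1$, that proof shows that the sign of $\partial\mu_1/\partial\rho_{1r}$ equals the sign of
\[
P(\alpha,\rho_{1r}) := w_1 \alpha^2 + (1-w_1)(1-2w_1)\rho_{1r}\alpha - (1-w_1)^3.
\]
The crucial observation is that, for fixed $\sigma_1$ and $\sigma_r$ (hence fixed $\alpha$), $P$ is affine in $\rho_{1r}$. Therefore its sign on $[-1,1]$ is determined entirely by the two endpoint values $P(\alpha,\pm 1)$ together with the sign of the slope $(1-w_1)(1-2w_1)\alpha$ in $\rho_{1r}$, which flips exactly when $w_1$ crosses $1/2$.

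First I would compute the endpoint values. Using the elementary identity $(1-2w_1)^2 + 4w_1(1-w_1) = 1$ (which is implicit in the definitions of $\ell$ and $u$), one checks that $P(\alpha,-1)$, viewed as a quadratic in $\alpha$ opening upward, has positive root $(1-w_1)^2/w_1$, while $P(\alpha,1)$ has positive root $1-w_1$. Hence $P(\alpha,-1) > 0$ iff $\alpha > (1-w_1)^2/w_1$, and $P(\alpha,1) > 0$ iff $\alpha > 1-w_1$. The pair $\{\ell,u\}$ is precisely these two positive roots, ordered according to whether $w_1$ is smaller or larger than $1/2$.

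Then I would split into cases. If $w_1 < 1/2$, then $\ell = 1-w_1$, $u = (1-w_1)^2/w_1$ and the slope of $P(\alpha,\cdot)$ is strictly positive. For $\alpha < \ell$ both endpoint values are negative, so $P < 0$ throughout and $\mu_1$ is decreasing; for $\alpha > u$ both are positive, so $\mu_1$ is increasing; for $\ell \le \alpha \le u$ one has $P(\alpha,-1) \le 0 \le P(\alpha,1)$, and by affine monotonicity there is a unique $\bar\rho \in [-1,1]$ at which $P$ vanishes, with $P < 0$ to its left and $P > 0$ to its right — this is statement (a). The case $w_1 > 1/2$ is symmetric: now $\ell = (1-w_1)^2/w_1 < 1-w_1 = u$ and the slope of $P(\alpha,\cdot)$ is strictly negative, so the same endpoint analysis reverses the sign pattern on $[-1,1]$ and produces (b) together with the two monotonicity conclusions outside $[\ell,u]$.

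There is no real obstacle here: everything reduces to the affineness of $P$ in $\rho_{1r}$ for fixed $\alpha$ and the identity $(1-2w_1)^2 + 4w_1(1-w_1) = 1$. The only care required is to track the two reversals — the ordering of the positive roots $1-w_1$ and $(1-w_1)^2/w_1$, and the sign of the slope of $P$ in $\rho_{1r}$ — which occur simultaneously at $w_1=1/2$ and together produce the symmetric conclusions (a) and (b).
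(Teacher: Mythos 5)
Your proof is correct, but it is organized around a different pivot than the paper's. The paper fixes the root function $\rho_{1r}\mapsto\alpha_+(\rho_{1r})$ from the preceding Proposition, differentiates it, shows it is monotone (increasing for $w_1>1/2$, decreasing for $w_1<1/2$), and then reads off the three regimes from the endpoint values $\alpha_+(\pm1)$, which are exactly your $1-w_1$ and $(1-w_1)^2/w_1$. You instead fix $\alpha=\sigma_r/\sigma_1$ and exploit the affineness of $P(\alpha,\cdot)$ in $\rho_{1r}$, so that the sign pattern on $[-1,1]$ is determined by the two endpoint values $P(\alpha,\pm1)$ and the sign of the slope $(1-w_1)(1-2w_1)\alpha$; the identity $(1-2w_1)^2+4w_1(1-w_1)=1$ then collapses the roots of the endpoint quadratics to $1-w_1$ and $(1-w_1)^2/w_1$, matching $\{\ell,u\}$. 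The two arguments are dual views of the same bivariate polynomial, but yours avoids any differentiation of $\alpha_+$ and is arguably more elementary; the paper's version makes the economic quantity $\alpha_+(\rho_{1r})$ (the accuracy threshold as a function of correlation) the explicit object of study, which is what the surrounding discussion refers to. One small point worth making explicit in your write-up: the slope of $P(\alpha,\cdot)$ is nonzero precisely when $w_1\neq1/2$, which is why the interior cases (a) and (b) exclude $w_1=1/2$ (there $\ell=u$ and the middle regime degenerates to a point), and the unique vanishing point $\bar\rho$ may sit at an endpoint of $[-1,1]$ when $\alpha\in\{\ell,u\}$, which is consistent with the statement.
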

\begin{proof}
  Note that one has
  \[
  \frac{\partial\alpha_+}{\partial\rho_{1r}} = \frac{1-w_1}{2w_1} \, 
  \frac{-a\sqrt{a^2\rho_{1r}^2+b}+a^2\rho_{1r}}{\sqrt{a^2\rho_{1r}^2+b}},
  \]
  where
  \[
  a:=1-2w_1, \qquad b := 4w_1(1-w_1)>0,
  \]
  therefore $\partial\alpha_+/\partial\rho_{1r}>0$ if and only if
  \[
  a^2\rho_{1r} > a \sqrt{a^2\rho_{1r}^2+b}.
  \]
  Simple calculations immediately reveal that this inequality is
  always satisfied if $a=1-2w_1<0$, and never satisfied if
  $a=1-2w_1>0$. Equivalently, $\rho_{1r} \mapsto \alpha_+(\rho_{1r})$
  is increasing if $w_1>1/2$, and decreasing if $w_1<1/2$. To complete
  the proofs it is enough to observe that one has
  \begin{align*}
    \alpha_+(-1) &= \frac{1-w_1}{2w_1} \Bigl(
    (1-2w_1) + \sqrt{(1-2w_1)^2+4w_1(1-w_1)} \Bigr)\\
    \alpha_+(1) &= \frac{1-w_1}{2w_1} \Bigl( -(1-2w_1) +
    \sqrt{(1-2w_1)^2+4w_1(1-w_1)} \Bigr).  \qedhere
  \end{align*}
\end{proof}
To describe the relationship between the expect payoff of agent $1$
and the correlation coefficient $\rho_{1r}$ we can thus distinguish
three regimes: if the relative accuracy of his signal is low
($\sigma_r/\sigma_1<\ell$), then, for any $\rho_{ir} \in [-1,1]$,
agent $1$ gains from a decline in the correlation between his signal
and the aggregated signal of all other agents; if the relative
accuracy of his signal is high ($\sigma_r/\sigma_1>u$), then, for any
$\rho_{ir}\in [-1,1]$, agent 1 gains from an increase in correlation;
if the (inverse) relative accuracy $\sigma_r/\sigma_1$ falls within
the interval $[\ell,u]$, then the dependence of $\mu_1$ on $\rho_{1r}$
turns out to be non-monotonic. In the first two cases, i.e. when the
(inverse) relative accuracy $\sigma_r/\sigma_1$ falls above $u$ or
below $\ell$, the result can be heuristically motivated as follows: if
agent 1 overestimates (underestimates) the true value, then the market
price will be even higher (lower) than his signal. According to his
decision rule (buying when the signal is above the market price and
selling when the signal is below the market price), he will then take
the correct trading position.  Moreover, note that, in the third
regime, if $w_1<1/2$ the function $\rho_{1r} \mapsto \mu_1(\rho_{1r})$
has exactly one global minimum, which implies a higher expected payoff
for extreme (either positive or negative) rather than for intermediate
levels of correlation. This result hence offers a complete explanation
of the contradicting empirical observations according to which
analysts with a high reputation tend to herd \citep[see,
e.g.][]{Graham1999}, as well as to deviate more drastically from the
consensus forecast \citep[see, e.g.,][]{Lamont2002, Bernhardt2006}.

\subsection{Full Communication Equilibrium\label{s:ree}}
In Section \ref{s:imp} we consider naive investors who trade on the
basis of their individual signals. We propose a one-period model where
the true value of the firm, and therefore gains and losses of each
analyst, are revealed immediately after the trade. Of course, in the
long run the different agents will participate to the market only
under the condition that the expected payoff is not negative. This
holds also for the market marker, whose expected payoff -- by
definition of a dealer market -- is given by the negative sum of the
single $\mu_i$'s. In order to avoid a breakdown of trading as
described by the theory of lemon markets and given the zero-sum
property of the game, an equilibrium implies that the expected payoff
of each agent must be equal to zero (i.e, $\mu_i = 0$ $\forall
i=1,\ldots,n$). A natural question is whether our model, where prices
are equal to a weighted average of the single signals, see
\eqref{eq:p}, is able to accommodate such an equilibrium.

In order to keep the market alive, the market maker will have an
interest to ensure a level playing field for all participants. He
could use the observed covariance matrix $Q$ and the single signals
$\xi_i$ revealed by the conditional orders to set the price. Using the
model of \cite{Black1992} with non-informative priors, one obtains the
following choice for the price: $p^*:=\ip{Q^{-1}
  \1}{\1}^{-1}\ip{Q^{-1} \1 }{\xi}$, which corresponds in \eqref{eq:p}
to
\begin{equation}
  w^*:=\ip{Q^{-1} \1}{\1}^{-1}(Q^{-1} \1). 
  \label{eq:w}
\end{equation}
If market prices are fully revealing all available information,
then the market is said to be in a full communication equilibrium
\citep[see][]{Radner1979}. With this choice of $w$ one has, for any $k
\in \{1,2,\ldots,n\}$,
\[
\sigma_{kp} = \ip{Q_k}{w^*} = \frac{\ip{Q^{-1}Q_1}{\1}}{\ip{Q^{-1}\1}{\1}}
= \frac{\ip{e_k}{\1}}{\ip{Q^{-1}\1}{\1}} = \frac1{\ip{Q^{-1}\1}{\1}}
= \ip{Qw^*}{w^*} = \sigma_p^2,
\]
hence, by \eqref{eq:expgain}, in equilibrium the expected payoff of
all market participants is zero.  As a consequence, the expected
payoff of the market marker is also equal to zero. According to
\cite{Radner1979} any fully-revealing communication equilibrium is
also a fully-revealing rational expectation equilibrium. Let us show
that, in fact, such equilibrium is unique, in the sense of the
following Proposition.

\begin{prop}
  There exists one and only one vector $w \in \erre^n$, with
  $\sum_{k=1}^n w_k=1$, such that $\mu_i=0$ for all
  $i=1,2,\ldots,n$.
\end{prop}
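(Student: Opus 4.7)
The plan is to use the closed-form expression \eqref{eq:expgain} to convert the $n$ conditions $\mu_i=0$ into a single linear system for $w$ that admits a unique solution. By Proposition~\ref{prop:22} and \eqref{eq:expgain}, applied to a generic agent $i$ via the relabeling argument used throughout Section~\ref{s:model}, one has
\[
\mu_i = \sqrt{2/\pi}\;\frac{\sigma_p^2 - \sigma_{ip}}{\sqrt{\operatorname{Var}(\xi_i - p)}}.
\]
The denominator is strictly positive for every admissible $w$: since $Q$ is non-singular and at least two components of $w$ are nonzero, the coefficient vector of $\xi_i - p$ expressed in the basis $(\xi_1,\dots,\xi_n)$ cannot vanish, so $\operatorname{Var}(\xi_i-p)>0$. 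Hence the system $\{\mu_i=0\}_{i=1}^n$ is equivalent to $\sigma_{ip}=\sigma_p^2$ for every $i$.

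Next I would recognize that $\sigma_{ip}=\ip{Q_i}{w}=(Qw)_i$ and $\sigma_p^2=\ip{Qw}{w}$, so the $n$ equilibrium conditions read
\[
Qw = \ip{Qw}{w}\,\mathbf{1}.
\]
Setting $\lambda := \ip{Qw}{w}$ and using that $Q$ is invertible gives $w = \lambda Q^{-1}\mathbf{1}$. Imposing the normalization $\ip{w}{\mathbf{1}}=1$ forces $\lambda\ip{Q^{-1}\mathbf{1}}{\mathbf{1}}=1$, and since $Q^{-1}$ is strictly positive definite, the scalar $\ip{Q^{-1}\mathbf{1}}{\mathbf{1}}$ is strictly positive, so $\lambda$ is uniquely determined. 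Consequently
\[
w = \frac{Q^{-1}\mathbf{1}}{\ip{Q^{-1}\mathbf{1}}{\mathbf{1}}} = w^*,
\]
which is exactly the vector \eqref{eq:w}. Existence was already verified in the discussion preceding the Proposition, where it was shown that this $w^*$ yields $\sigma_{kp}=\sigma_p^2$ for every $k$.

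The only delicate point — and the one I would sanity-check most carefully — is the non-vanishing of $\operatorname{Var}(\xi_i-p)$, which is what legitimately lets one pass from $\mu_i=0$ to the linear condition $\sigma_{ip}=\sigma_p^2$. Beyond that, the argument is a short piece of linear algebra: the structural observation that drives everything is that, thanks to \eqref{eq:expgain}, the $n$ scalar equations $\mu_i=0$ decouple into the single vector equation $Qw\in\mathrm{span}\{\mathbf{1}\}$, which combined with the affine constraint $\ip{w}{\mathbf{1}}=1$ pins down $w$ uniquely.
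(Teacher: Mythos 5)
Your proof is correct and takes essentially the same route as the paper's: both reduce the $n$ conditions $\mu_i=0$ to the linear system $\ip{Q_i}{w}=\ip{Qw}{w}$ for all $i$ (legitimate because the denominator $\sqrt{\operatorname{Var}(\xi_i-p)}$ is strictly positive under the standing assumption that $w$ has at least two nonzero components, a point you check more explicitly than the paper does) and then show that this system together with the normalization $\ip{w}{\mathbf{1}}=1$ forces $w=w^*$. The only difference is cosmetic: you obtain uniqueness by inverting $Q$ to get $w=\lambda Q^{-1}\mathbf{1}$ and pinning down $\lambda$ via the normalization, whereas the paper counts dimensions, observing that $w$ must lie in the one-dimensional orthogonal complement of $\operatorname{span}(Q_1-Q_2,\ldots,Q_1-Q_n)$; both arguments are equally valid.
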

\begin{proof}
  It is enough to show that there exists a unique vector $w \in
  \erre^n $ with $\sum_{k=1}^n w_k=1$, such that
  \begin{equation}     \label{eq:ww}
    \ip{Qw}{w} = \ip{Q_k}{w} \qquad \forall k=1,\ldots,n.
  \end{equation}
  Existence has already been proved by explicitly constructing a
  solution $w^*$. It is thus enough to prove uniqueness. Condition
  \eqref{eq:ww} implies $\ip{Q_1}{w} = \ip{Q_k}{w}$ for all $k>1$,
  hence $\ip{Q_1-Q_k}{w} = 0$ for all $k>1$. This in turn implies $w
  \in V^\perp$, where
  \[
  V = \operatorname{span} (Q_1-Q_2, \ldots, Q_1-Q_n),
  \]
  and $V^\perp$ stands for the orthogonal complement of $V$ in 
  $\erre^n$, so that $\erre^n = V \oplus V^\perp$. Since $Q$ is
  assumed to be non-singular, the vectors $Q_1,\ldots,Q_n$ are
  linearly independent, hence $\dim V = n-1$, which implies that $\dim
  V^\perp = 1$. Since $w^* \neq 0$, then $w^*$ is a generator of
  $V^\perp$. In particular, $0 \neq w \in V^\perp$ implies that there
  exists $\alpha \neq 0$ such that $w=\alpha w^*$. Then $1 =
  \sum_{k=1}^n w_k=\alpha \sum_{k=1}^n w^*_k = \alpha$ yields the
  uniqueness of $w^*$.
\end{proof}


\section{Density function of the payoff}
\label{s:joint}
In the previous section we have presented a closed-form solution for
the expected trading payoffs of the single agents, see
\eqref{eq:expgain}. As a matter of fact, we can give a complete
characterization of trading payoffs as random variables. In this
section we provide a closed-form expression for the density of the
payoff for each agent.  Throughout the section we adopt the notation
introduced in Lemma \ref{lm:S} and Proposition \ref{prop:22}.

We begin with an auxiliary result, which might be interesting in its
own right.
\begin{prop}     \label{prop:joint}
  Let
  \[
  F(x_1,x_2) := \P \bigl( p \leq x_1,\, \sgn(\xi_1-p) = x_2 \bigr),
  \qquad x_1 \in \erre, \; x_2 \in \{-1,1\},
  \]
  denote the joint distribution of $p$ and $\sgn(\xi_1-p)$. One has
  \[
  F(x_1,x_2;\beta) = \Phi(x_1/c) \Phi(-\beta x_1x_2/c)
  - \frac{x_2}{2\pi} \arctan(1/\beta) 
  + x_2 \sgn\beta \biggl( \frac14 
  + \int_0^{\frac{|\beta|}{c}x_1} \Phi(z/|\beta|) \phi(z)\,dz \biggr).
  \]
\end{prop}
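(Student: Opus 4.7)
The plan is to use Lemma \ref{lm:S} to reduce $F$ to a one-dimensional integral, and then to evaluate that integral in closed form by combining integration by parts with a Sheppard-type identity. Writing $(\xi_1,p) = (aX+bY,cY)$ in law with $X,Y$ independent standard Gaussians and $a,c>0$, one has $\sgn(\xi_1-p) = \sgn(a(X-\beta Y)) = \sgn(X-\beta Y)$. Conditioning on $Y$ and using independence of $X$ and $Y$,
\[
F(x_1,x_2) = \int_{-\infty}^{x_1/c} \P\bigl(\sgn(X-\beta y) = x_2\bigr)\,\phi(y)\,dy
= \int_{-\infty}^{x_1/c} \Phi(-x_2\beta y)\,\phi(y)\,dy.
\]
So the whole problem reduces to computing $I(t,\gamma):=\int_{-\infty}^{t}\Phi(\gamma y)\phi(y)\,dy$ in closed form and then specializing to $\gamma=-x_2\beta$, $t=x_1/c$.

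For $\gamma>0$ I would integrate by parts with $u=\Phi(\gamma y)$, $dv=\phi(y)\,dy$; the boundary term at $-\infty$ vanishes because both $\Phi(\gamma y)$ and $\Phi(y)$ tend to zero there, so
\[
I(t,\gamma) = \Phi(\gamma t)\Phi(t) - \gamma\int_{-\infty}^{t}\Phi(y)\phi(\gamma y)\,dy
= \Phi(\gamma t)\Phi(t) - \int_{-\infty}^{\gamma t}\Phi(z/\gamma)\phi(z)\,dz,
\]
after the substitution $z=\gamma y$. Splitting the remaining integral at $0$, the left part is handled by the classical identity
\[
J(\lambda) := \int_{0}^{\infty}\Phi(\lambda y)\phi(y)\,dy = \frac{1}{4} + \frac{1}{2\pi}\arctan\lambda,
\]
which I would verify in one line by differentiating under the integral sign (so that $J'(\lambda) = 1/(2\pi(1+\lambda^2))$) and checking $J(0)=1/4$. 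Reflecting $z \mapsto -z$ gives $\int_{-\infty}^{0}\Phi(z/\gamma)\phi(z)\,dz = \tfrac14 - \tfrac{1}{2\pi}\arctan(1/\gamma)$, whence, for $\gamma>0$,
\[
I(t,\gamma) = \Phi(\gamma t)\Phi(t) - \frac14 + \frac{1}{2\pi}\arctan(1/\gamma) - \int_{0}^{\gamma t}\Phi(z/\gamma)\phi(z)\,dz.
\]

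For $\gamma<0$ the identity $\Phi(\gamma y) = 1-\Phi(-\gamma y)$ yields $I(t,\gamma)=\Phi(t) - I(t,-\gamma)$, reducing to the previous formula applied at $-\gamma>0$. The final step is then pure bookkeeping: substitute $\gamma=-x_2\beta$, $t=x_1/c$, use $\arctan(1/\gamma) = \sgn(\gamma)\arctan(1/|\gamma|)$, and collect the four cases determined by $\sgn\beta$ and $x_2\in\{-1,+1\}$ into the single expression in the statement, whose $\sgn\beta$ and $x_2$ factors precisely encode those sign flips. The main (and really only) obstacle is this case analysis; every analytic step is elementary once one has the right decomposition.
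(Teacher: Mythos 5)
Your proposal is correct and follows essentially the same route as the paper: reduce to the independent pair $(X,Y)$ via Lemma \ref{lm:S}, arrive at the intermediate expression $\Phi(\gamma t)\Phi(t)-\int_{-\infty}^{\gamma t}\Phi(z/\gamma)\phi(z)\,dz$, and finish by splitting that integral at $0$ and invoking the $\arctan$ identity (Lemma \ref{lm:atan}, which you re-derive by differentiation under the integral sign instead of the paper's polar-coordinate argument). The remaining differences are purely organizational --- you condition on $Y$ and integrate by parts where the paper conditions on $X$ and applies Fubini directly, and your parametrization $\gamma=-x_2\beta$ collapses the paper's four sign cases into a single computation plus bookkeeping --- so no further comparison is needed.
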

\begin{proof}
  By Lemma \ref{lm:S}, there exist $a>0$, $b\in\erre$ and $c>0$ such
  that $p=cY$ and $\xi_1=aX+bY$ in law, where $X$ and $Y$ are
  independent standard Gaussian random variables. Setting
  $\beta=(c-b)/a$, one has
  \[
  F(x_1,x_2) = \P\bigl(Y \leq x_1/c,\; \sgn(X-\beta Y) = x_2\bigr).
  \]
  Let us record, for later use, the following obvious observation:
  \begin{align*}
    \Phi(x_1/c) &= \P(Y \leq x_1/c) = \P\bigl( Y \leq x_1/c,\,
    \sgn(X-\beta Y) = -1 \bigr)
    + \P\bigl( Y \leq x_1/c,\, \sgn(X-\beta Y) = 1 \bigr)\\
    &= F(x_1,-1) + F(x_1,1).
  \end{align*}
  Let us consider first the case $\beta>0$: one has
  \[
  F(x_1,-1) = \P\bigl( Y \leq x_1/c,\, \sgn(X-\beta Y) = -1 \bigr) =
  \P\bigl( Y \leq x_1/c,\, X<\beta Y \bigr),
  \]
  where we have used the fact that $a>0$, $c>0$. Hence
  \begin{align*}
    F(x_1,-1) &= \P\bigl(X/\beta \leq Y \leq x_1/c \bigr) \\
    &= \int_{-\infty}^{\frac{\beta}{c}x_1} \P(z/\beta \leq Y \leq
    x_1/c)
    \, \phi(z)\,dz\\
    &= \Phi(x_1/c) \Phi(\beta x_1/c)
    - \int_{-\infty}^{\frac{\beta}{c}x_1} \Phi(z/\beta) \phi(z)\,dz.
  \end{align*}
  Appealing to Lemma \ref{lm:atan}, we end up with
  \[
  F(x_1,-1) = \Phi(x_1/c) \Phi(\beta x_1/c)
  - \frac14 + \frac{1}{2\pi} \arctan(1/\beta) - \int_0^{\frac{\beta}{c}x_1}
  \Phi(z/\beta) \phi(z)\,dz.
  \]
  The expression for $F(x_1,1)$ is obtained as follows:
  \begin{align*}
    F(x_1,1) &= \Phi(x_1/c) - F(x_1,-1)\\
    &= \Phi(x_1/c) \bigl( 1 - \Phi(\beta x_1/c) \bigr) +\frac14 -
    \frac{1}{2\pi} \arctan(1/\beta)
    + \int_0^{\frac{\beta}{c}x_1} \Phi(z/\beta) \phi(z)\,dz\\
    &= \Phi(x_1/c) \Phi(-\beta x_1/c) + \frac14 - \frac{1}{2\pi}
    \arctan(1/\beta)
    + \int_0^{\frac{\beta}{c}x_1} \Phi(z/\beta) \phi(z)\,dz\\
  \end{align*}
  Let us now consider the case $\beta<0$: by a reasoning completely
  analogous to the one used above, we can write
  \begin{align*}
    F(x_1,1) &= \P\bigl(Y \leq x_1/c,\; X/\beta \leq Y\bigr)
    = \P(X/\beta \leq Y \leq x_1/c)\\
    &= \int_{\frac{\beta}{c}x_1}^\infty
       \P(z/\beta \leq Y \leq x_1/c)\,\phi(z)\,dz\\
    &= \Phi(x_1/c) \Phi(-\beta x_1/c)
    - \int_{-\infty}^{-\frac{\beta}{c}x_1} \Phi(-z/\beta)\phi(z)\,dz\\
    &= \Phi(x_1/c) \Phi(-\beta x_1/c) - \frac14 - \frac{1}{2\pi}
    \arctan(1/\beta)
    - \int_0^{\frac{|\beta|}{c}x_1} \Phi(z/|\beta|) \phi(z)\,dz,
  \end{align*}
  hence also
  \begin{align*}
    F(x_1,-1) &= \Phi(x_1/c) - F(x_1,1)\\
    &= \Phi(x_1/c) \bigl(1 - \Phi(-\beta x_1/c) \bigr) + \frac14 +
    \frac{1}{2\pi} \arctan(1/\beta)
    + \int_0^{\frac{|\beta|}{c}x_1} \Phi(z/|\beta|) \phi(z)\,dz\\
    &= \Phi(x_1/c) \Phi(\beta x_1/c) + \frac14 + \frac{1}{2\pi}
    \arctan(1/\beta) + \int_0^{\frac{|\beta|}{c}x_1} \Phi(z/|\beta|)
    \phi(z)\,dz.
  \end{align*}
  We may thus write
  \begin{align*}
    F(x_1,-1) &= \Phi(x_1/c) \Phi(\beta x_1/c) + \frac{1}{2\pi}
    \arctan(1/\beta) - \sgn\beta \left( \frac14 +
      \int_0^{\frac{|\beta|}{c}x_1}
      \Phi(z/\beta) \phi(z)\,dz\right),\\
    F(x_1,1) &= \Phi(x_1/c) \Phi(-\beta x_1/c) - \frac{1}{2\pi}
    \arctan(1/\beta) + \sgn\beta \left(\frac14 +
      \int_0^{\frac{|\beta|}{c}x_1} \Phi(z/\beta) \phi(z)\,dz\right),
  \end{align*}
  that are equivalent to the claim.
\end{proof}

\begin{rmk}
  The joint distribution of $p$ and $\sgn(\xi_1-p)$ can alternatively
  be expressed in terms of the bivariate Gaussian law. Consider, for
  instance, the case $\beta>0$ and $x_2=-1$: Lemma \ref{lm:phi2}
  yields
  \begin{align*}
    F(x_1,-1) &= \Phi(x_1/c) \Phi(\beta x_1/c)
    - \int_{-\infty}^{\frac{\beta}{c}x_1} \Phi(z/\beta) \phi(z)\,dz\\
    &= \Phi(x_1/c) \Phi(\beta x_1/c) - \Phi_2\left(\beta x_1/c,0;
      -\frac{1}{\sqrt{1+\beta^2}}\right),
  \end{align*}
  where $\Phi_2(\cdot,\cdot;\rho)$ denotes the distribution function
  of a bivariate Gaussian random variable with correlation coefficient
  $\rho$.
\end{rmk}

The following Proposition is the main result of this section and of
the whole paper.
\begin{prop}
  Let $M=p \sgn(\xi_1-p)$ denote the negative payoff of agent $1$. The
  random variable $M$ has a (smooth) density $f_M$ given by
  \[
  f_M(z) = \frac2c \phi(z/c) \Phi(-\beta z/c).
  \]
\end{prop}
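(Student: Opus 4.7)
The plan is to deduce the density of $M = p \sgn(\xi_1 - p)$ by differentiating its cumulative distribution function, with the CDF expressed through the joint distribution $F$ of $(p, \sgn(\xi_1 - p))$ already computed in Proposition~\ref{prop:joint}.

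First I would decompose the event $\{M \leq z\}$ according to the sign of $\xi_1 - p$. On the event $\{\sgn(\xi_1 - p) = 1\}$ one has $M = p$, while on $\{\sgn(\xi_1 - p) = -1\}$ one has $M = -p$. Hence
\[
  F_M(z) = \P\bigl(p \leq z,\, \sgn(\xi_1-p) = 1\bigr)
  + \P\bigl(p \geq -z,\, \sgn(\xi_1-p) = -1\bigr)
  = F(z,1) + \P\bigl(\sgn(\xi_1-p)=-1\bigr) - F(-z,-1),
\]
using the (obvious) continuity of $F(\cdot,x_2)$ in the first argument. Differentiating in $z$ gives $f_M(z) = \partial_1 F(z,1) + \partial_1 F(-z,-1)$, where $\partial_1$ denotes the partial derivative with respect to the first argument.

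The key computation is $\partial_1 F(x_1, x_2)$ for $x_2 = \pm 1$. Differentiating the expression from Proposition~\ref{prop:joint} term by term, and using that $\phi$ is even so that $\phi(\beta x_1/c) = \phi(-\beta x_1/c) = \phi(|\beta| x_1/c)$ and that $\sgn\beta \cdot |\beta| = \beta$, the derivatives of the second and third summands produce two terms of the form $\pm (\beta x_2/c)\,\Phi(x_1/c)\phi(\beta x_1/c)$ that cancel (for both $x_2 = +1$ and $x_2 = -1$). What survives is just the derivative of the leading factor $\Phi(x_1/c)\Phi(-\beta x_1 x_2/c)$ in its first occurrence, yielding
\[
  \partial_1 F(x_1,1) = \frac1c \phi(x_1/c)\Phi(-\beta x_1/c),
  \qquad
  \partial_1 F(x_1,-1) = \frac1c \phi(x_1/c)\Phi(\beta x_1/c).
\]

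Finally, substituting $x_1 = z$ and $x_1 = -z$ respectively, using $\phi(-z/c) = \phi(z/c)$ and $\Phi(\beta \cdot (-z)/c) = \Phi(-\beta z/c)$, gives
\[
  f_M(z) = \frac1c\phi(z/c)\Phi(-\beta z/c) + \frac1c\phi(z/c)\Phi(-\beta z/c)
  = \frac2c \phi(z/c) \Phi(-\beta z/c),
\]
as claimed. The main (mild) obstacle is keeping track of signs and the absolute value of $\beta$ when differentiating the integral term; once one exploits the evenness of $\phi$ the two mixed-derivative contributions cancel exactly, which is what makes the final density so clean.
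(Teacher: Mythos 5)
Your proposal is correct and follows essentially the same route as the paper: both decompose $\{M\le z\}$ according to the sign of $\xi_1-p$, express $F_M$ through the joint distribution $F$ of Proposition~\ref{prop:joint}, and differentiate, with the mixed terms involving $\phi(\beta z/c)$ cancelling. The only cosmetic difference is that you differentiate $F(\cdot,\pm 1)$ separately before summing, whereas the paper first assembles $F_M(z)$ explicitly and then differentiates the combined expression.
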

\begin{proof}
  Let us first compute the distribution $F_M(z):=\P(M \leq z)$ of the
  random variable $M$. One has
  \begin{align*}
    F_M(z) &= \P( p \sgn(\xi_1-p) \leq z)\\
    &= \P\bigl( p \sgn(\xi_1-p) \leq z,\; \sgn(\xi_1-p)=-1 \bigr)
    + \P\bigl( p \sgn(\xi_1-p) \leq z,\; \sgn(\xi_1-p)=1 \bigr)\\
    &= \P\bigl( p \geq -z,\; \sgn(\xi_1-p)=-1 \bigr)
    + \P\bigl( p \leq z,\; \sgn(\xi_1-p)=1 \bigr)\\
    &= \P(\sgn(\xi_1-p)=-1) - F(-z,-1) + F(z,1),
  \end{align*}
  where $F$ denotes the joint distribution of $p$ and $\sgn(\xi_1-p)$,
  and
  \[
  \P(\sgn(\xi_1-p)=-1) = \P(X \leq \beta Y) = 1/2,
  \]
  which follows immediately because $X$ and $Y$ are independent,
  symmetric, and have a continuous distribution: $\P(X \leq \beta Y) +
  \P(X \geq \beta Y) = 1$, but $\P(X \geq \beta Y) = \P(-X \geq -\beta
  Y) = \P(X \leq \beta Y)$.

  Thanks to Proposition \ref{prop:joint}, one has
  \[
  F_M(z) = C_\beta + \bigl(2\Phi(z/c)-1\bigr) \Phi(-\beta z/c)
  + \sgn\beta \left( \int_0^{-\frac{|\beta|}{c}z} \Phi(y/|\beta|) \phi(y)\,dy
  + \int_0^{\frac{|\beta|}{c}z} \Phi(y/|\beta|) \phi(y)\,dy \right),
  \]
  where $C_\beta$ denotes a constant that depends only on $\beta$,
  hence
  \begin{align*}
    f_M(z) = \frac{d}{dz} F_M(z) &= \frac2c \phi(z/c) \Phi(-\beta z/c)
    -\frac{\beta}{c}\phi(\beta z/c) \bigl(2\Phi(z/c)-1\bigr)\\
    &\qquad + \sgn\beta \, \Bigl( \frac{|\beta|}{c}
    \Phi(z/c)\phi(|\beta|z/c)
    - \frac{|\beta|}{c} \Phi(-z/c)\phi(-|\beta|z/c) \Bigr)\\
    &= \frac2c \phi(z/c) \Phi(-\beta z/c) + \frac{\beta}{c}\phi(\beta
    z/c)
    -2\frac{\beta}{c} \Phi(z/c) \phi(\beta z/c)\\
    &\qquad + \frac{\beta}{c} \Phi(z/c) \phi(\beta z/c)
    - \frac{\beta}{c} \Phi(-z/c) \phi(\beta z/c)\\
    &= \frac2c \phi(z/c) \Phi(-\beta z/c) + \frac{\beta}{c}\phi(\beta
    z/c)
    - \frac{\beta}{c}\phi(\beta z/c) \bigl( \Phi(z/c) + \Phi(-z/c) \bigr)\\
    &= \frac2c \phi(z/c) \Phi(-\beta z/c).
    \qedhere
  \end{align*}
\end{proof}

It is immediately seen that the density of the random variable $-M$,
which represents the trading payoff of agent $1$, is given by
\[
f_{-M}(z) = f_M(-z) = \frac2c \phi(z/c) \Phi(\beta z/c).
\]

\subsection{Higher moments}
\label{s:mom}
Given the complete probabilistic characterization of the (negative)
trading profit $M$ just obtained, one can compute any moment of $M$
simply integrating against the density $f_M$. It is easier, however,
to proceed differently.

For an odd number $2k-1$, $k\in\mathbb{N}$, we define $(2k-1)!! :=
\prod_{j=1}^k (2j-1)$, and set, by convention, $(-1)!!:=1$.
\begin{prop}
  Let $k \in \mathbb{N}$. One has
  \begin{align*}
  \E M^{2k}   &= (2k-1)!! \, \ip{Qw}{w}^k,\\
  \E M^{2k-1} &= -\frac{2^k (k-1)! \, c^{2k-1} \beta}{\sqrt{2\pi(1+\beta^2)}} \,
  \sum_{j=0}^{k-1} \frac{(2j-1)!!}{(2j)!!} \, \frac{1}{(1+\beta^2)^j}.
  \end{align*}
\end{prop}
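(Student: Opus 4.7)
The plan is to build on the representation established in Lemma~\ref{lm:S}: in law one has $p = cY$ and $\xi_1 - p = aX + (b-c)Y$, so $\sgn(\xi_1-p) = \sgn(X-\beta Y)$ (using $a>0$), and hence $M = cY\sgn(X-\beta Y)$, with $X$, $Y$ independent standard Gaussians. With this in hand the two cases split cleanly.

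For the even moments, $(\sgn(X-\beta Y))^{2k} = 1$ almost surely, so $M^{2k} = c^{2k} Y^{2k}$. The standard formula $\E Y^{2k} = (2k-1)!!$ together with $c^2 = \ip{Qw}{w}$ immediately gives $\E M^{2k} = (2k-1)!!\,\ip{Qw}{w}^k$.

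For the odd moments I would proceed as in the proof of Proposition~\ref{prop:22}: conditioning on $Y$ gives
\[
\E M^{2k-1} = c^{2k-1}\int_{\R} y^{2k-1}\bigl(1-2\Phi(\beta y)\bigr)\phi(y)\,dy
= -2c^{2k-1}\int_{\R} y^{2k-1}\Phi(\beta y)\phi(y)\,dy,
\]
since $y^{2k-1}\phi(y)$ is odd and integrable. Then integrate by parts with $u = \Phi(\beta y)$, $dv = y^{2k-1}\phi(y)\,dy$. The antiderivative of $y^{2k-1}\phi(y)$ is $-P_k(y)\phi(y)$, where $P_k$ is the unique polynomial satisfying $yP_k(y)-P_k'(y)=y^{2k-1}$; an easy induction on $k$ based on $\int y^{2k+1}\phi(y)\,dy = -y^{2k}\phi(y)+2k\int y^{2k-1}\phi(y)\,dy$ gives the recursion $P_1\equiv 1$, $P_{k+1}(y)=y^{2k}+2k\,P_k(y)$, whose solution is
\[
P_k(y) = \sum_{j=0}^{k-1} 2^{k-1-j}\,\frac{(k-1)!}{j!}\, y^{2j}.
\]
The boundary terms vanish since $\Phi$ is bounded and $P_k\phi$ decays at infinity, and we are left with
\[
\int_{\R} y^{2k-1}\Phi(\beta y)\phi(y)\,dy
= \beta \int_{\R} P_k(y)\phi(y)\phi(\beta y)\,dy.
\]

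The last step is to evaluate this integral by observing that $\phi(y)\phi(\beta y) = \frac{1}{\sqrt{2\pi(1+\beta^2)}}\,\tilde{\phi}(y)$, where $\tilde{\phi}$ is the density of $N(0,1/(1+\beta^2))$. With $Z\sim N(0,1/(1+\beta^2))$ and $\E Z^{2j}=(2j-1)!!/(1+\beta^2)^j$, one obtains
\[
\int_{\R} P_k(y)\phi(y)\phi(\beta y)\,dy
= \frac{1}{\sqrt{2\pi(1+\beta^2)}}\sum_{j=0}^{k-1} 2^{k-1-j}\frac{(k-1)!}{j!}\,\frac{(2j-1)!!}{(1+\beta^2)^j}.
\]
Using $(2j)!! = 2^j j!$ to rewrite $2^{k-1-j}/j! = 2^{k-1}/(2j)!!$, the sum collapses to $2^{k-1}(k-1)!\sum_{j=0}^{k-1}\frac{(2j-1)!!}{(2j)!!}(1+\beta^2)^{-j}$, and multiplying by $-2c^{2k-1}\beta$ yields the stated closed form. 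The main obstacle is identifying the polynomial $P_k$ and carrying out the bookkeeping that turns $2^{k-1-j}/j!$ into $1/(2j)!!$; once one notices $(2j)!!=2^j j!$, the match with the target expression is automatic.
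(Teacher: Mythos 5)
Your proof is correct, and both halves reach the stated formulas: the even-moment argument ($M^{2k}=p^{2k}$ with $p\sim N(0,\ip{Qw}{w})$) is identical to the paper's, and your odd-moment computation checks out, including the identity $(2j)!!=2^jj!$ that converts $2^{k-1-j}/j!$ into $2^{k-1}/(2j)!!$ and the convention $(-1)!!=1$ for the $j=0$ term.

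For the odd moments the organization differs from the paper's in a way worth noting. The paper also reduces to $f_k:=\int_\R\Phi(\beta y)\,y^{2k-1}\phi(y)\,dy$, but it integrates by parts only once per step, obtaining the first-order difference equation $f_{k+1}-2kf_k=a_k$ with $a_k=\beta\int y^{2k}\phi(\beta y)\phi(y)\,dy$, and then solves it by normalizing by $2^kk!$ and telescoping. You instead construct the full antiderivative $-P_k(y)\phi(y)$ of $y^{2k-1}\phi(y)$ via the polynomial recursion $P_{k+1}(y)=y^{2k}+2kP_k(y)$, solve that recursion in closed form, and perform a single integration by parts followed by one Gaussian-moment evaluation of $\int P_k(y)\phi(y)\phi(\beta y)\,dy$. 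The two routes are built on the same elementary identity $\tfrac{d}{dy}\bigl(-y^{2k}\phi(y)\bigr)=(y^{2k+1}-2ky^{2k-1})\phi(y)$, and your coefficients $2^{k-1-j}(k-1)!/j!$ are exactly the accumulated products of the paper's telescoping; what your version buys is an explicit Hermite-type antiderivative and a one-shot evaluation, at the cost of having to guess and verify $P_k$ by induction, whereas the paper's difference-equation bookkeeping avoids naming the polynomial at all. Both the vanishing of the boundary terms and the uniqueness of $P_k$ among polynomials (degree count on $yR-R'$) are handled correctly in your write-up.
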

\begin{proof}
  Since $M^{2k} = p^{2k}$, and $p=\ip{w}{\xi}$ is a centered Gaussian
  random variable with variance $\ip{Qw}{w}$, standard formulas for
  moments of Gaussian laws give
  \[
  \E M^{2k} = \E p^{2k} = \E \ip{w}{\xi}^{2k}
  = (2k-1)!! \, \ip{Qw}{w}^k.
  \]
  Moreover, by Lemma \ref{lm:S}, one has
  \begin{align*}
    \E M^{2k+1} &= \E p^{2k+1}\sgn(\xi_1-p)
    = c^{2k+1} \E Y^{2k+1} \sgn(X-\beta Y)\\
    &= c^{2k+1} \int_\erre y^{2k+1} \E\left[\sgn(X-\beta y)\right] \, \phi(y)\,dy,
  \end{align*}
  where $\E\left[\sgn(X-\beta y)\right] = 1-2\Phi(\beta y)$, hence
  \begin{equation}     \label{eq:dispari}
  \E M^{2k+1} = -2c^{2k+1} \int_\erre y^{2k+1} \Phi(\beta y)
  \,\phi(y)\,dy.
  \end{equation}
  Note that, setting $v(y)=y^{2k}\phi(y)$, one has
  \[
  v'(y) = \bigl( 2k y^{2k-1} - y^{2k+1}\bigr) \phi(y),
  \]
  hence, integrating by parts in \eqref{eq:dispari},
  \begin{equation}     \label{eq:ric}
  \begin{split}
    \int_\erre \Phi(\beta y) \, y^{2k+1} \phi(y)\,dy &= 
    \int_\erre \Phi(\beta y) 
        \bigl( y^{2k+1} - 2k y^{2k-1} \bigr)\phi(y)\,dy\\
    &\qquad + 2k \int_\erre \Phi(\beta y) \, y^{2k-1}\phi(y)\,dy\\
    &= \beta \int_\erre y^{2k} \phi(\beta y) \phi(y)\,dy 
       + 2k \int_\erre \Phi(\beta y) \, y^{2k-1}\phi(y)\,dy,
  \end{split}
  \end{equation}
  where, setting $\sigma_\beta := (1+\beta^2)^{-1/2}$ for convenience
  of notation, and recalling again standard formulas for moments of
  Gaussian laws,
  \begin{align*}
    \int_\erre y^{2k} \phi(\beta y) \phi(y)\,dy &= 
    \bigl( 2\pi(1+\beta^2) \bigr)^{-1/2}
    \frac{1}{\sigma_\beta \sqrt{2\pi}} 
    \int_\erre y^{2k} e^{-y^2/(2\sigma_\beta^2)}\,dy\\  
    &= \bigl( 2\pi(1+\beta^2) \bigr)^{-1/2} \, (2k-1)!! \,
    (1+\beta^2)^{-k}.
  \end{align*}
  Therefore, setting
  \begin{align*}
    f_k &:= \int_\erre \Phi(\beta y) \, y^{2k-1}\phi(y)\,dy,\\
    a_k &:= \frac{\beta}{\sqrt{2\pi(1+\beta^2)}} \, (2k-1)!! \,
    (1+\beta^2)^{-k},
  \end{align*}
  \eqref{eq:ric} can be written as
  \[
  f_{k+1} - 2k f_k = a_k, \qquad
  f_1=\frac{\beta}{\sqrt{2\pi(1+\beta^2)}}.
  \]
  Dividing both sides of this difference equation by $2^k k!$, we are
  left with
  \[
  \frac{f_{k+1}}{2^k\,k!} - \frac{f_k}{2^{k-1}\,(k-1)!}
  = \frac{a_k}{2^k\,k!},
  \]
  hence, setting
  \[
  g_k := \frac{f_k}{2^{k-1}\,(k-1)!},
  \qquad b_k := \frac{a_k}{2^k\,k!},
  \]
  the previous difference equation is equivalent to
  \[
  g_{k+1}-g_k = b_k, \qquad g_1 = \frac{\beta}{\sqrt{2\pi(1+\beta^2)}},
  \]
  which can be easily solved, writing the telescoping sum
  \begin{align*}
    g_k - g_1 &= (g_k - g_{k-1}) + (g_{k-1} - g_{k-2}) + \cdots
    + (g_2 - g_1)\\
    &= b_{k-1} + b_{k-2} + \cdots + b_1,
  \end{align*}
  which yields
  \begin{align*}
    f_k &= 2^{k-1} (k-1)! \, g_k
    = 2^{k-1} (k-1)! \, \biggl( g_1 + \sum_{j=1}^{k-1} b_j \biggl)\\
    &= 2^{k-1} (k-1)! \, \biggl( \frac{\beta}{\sqrt{2\pi(1+\beta^2)}}
    + \frac{\beta}{\sqrt{2\pi(1+\beta^2)}} \sum_{j=1}^{k-1}
      \frac{(2j-1)!!}{2^jj!} \, \frac{1}{(1+\beta^2)^j} \biggr)\\
    &= \frac{2^{k-1} (k-1)! \, \beta}{\sqrt{2\pi(1+\beta^2)}} \,
       \sum_{j=0}^{k-1} \frac{(2j-1)!!}{(2j)!!} \, \frac{1}{(1+\beta^2)^j}.
  \end{align*}
  By definition of $f_k$ and \eqref{eq:dispari} one finally obtains
  \[
  \E M^{2k-1} = -2c^{2k-1} f_k 
  = -\frac{2^k (k-1)! \, c^{2k-1} \beta}{\sqrt{2\pi(1+\beta^2)}} \,
  \sum_{j=0}^{k-1} \frac{(2j-1)!!}{(2j)!!} \, \frac{1}{(1+\beta^2)^j}.
  \qedhere
  \]
\end{proof}
It is immediately seen that the moments of the payoff $-M$ of even
degree coincide with those of $M$, while the moments of odd degree are
equal in absolute value, but with opposite sign. Moreover, centered
moments can easily be obtained by the non-central one just derived.

\section{Numerical example}\label{s:num}
In this section we illustrate our main results with a numerical
example. We assume that four agents, labeled by $i \in
\{1,\ldots,4\}$, participate to the market. The covariance matrix $Q$
of their signals is displayed in Table~\ref{t:signal} below. The
analysts are labeled according to their forecast precision (measured
by the standard deviation $\sigma_i$ of their signals), in reverse
order. The market maker assigns equal weight ($w_i=0.25)$ to each
signal. The numerical values chosen here try to mimic models with a
cumulative information structure \citep[see][]{Schredelseker1984,
  Schredelseker2001}, where analysts with an intermediate accuracy
implicitly face the highest correlation (note that the signal of agent
$1$ is less correlated with other agents' signals than the signal of
agent $2$).

\begin{table}[htb]
\centering
\begin{tabular}{c|c|cccc|cccc}
\hline
         $i$ &         $\sigma_i$ &                        \multicolumn{ 4}{|c}{$C$} &                           \multicolumn{ 4}{|c}{$Q$} \\
\hline
         1 &        1.3 &          1 &        0.9 &        0.6 &        0.3 &      1.690 &      1.404 &      0.858 &      0.390 \\

         2 &        1.2 &        0.9 &          1 &        0.8 &        0.6 &      1.404 &      1.440 &      1.056 &      0.720 \\

         3 &        1.1 &        0.6 &        0.8 &          1 &        0.7 &      0.858 &      1.056 &      1.210 &      0.770 \\

         4 &          1 &        0.3 &        0.6 &        0.7 &          1 &      0.390 &      0.720 &      0.770 &      1.000 \\
\hline
\end{tabular} 
\caption{Signal structure of the numerical example: $C$ and $Q$ denote the correlation matrix and the covariance matrix, respectively.} 
\label{t:signal}
\end{table}

In the following we compare the payoffs of all analysts. We use the
formulas of the previous section to calculate the first four central
moments.\footnote{Note that central moments can be immediately
  obtained from the non-central moments of Section \ref{s:mom}.}
Table~\ref{t:mom} shows that, although analyst $1$ has a less accurate
signal than analyst $2$ (with a resulting higher variance and kurtosis
in the payoffs), his expected payoff is better. Obviously, analyst
$2$ suffers from his high correlation with all other market
participants, which is also reflected in a more negative skewness.

Therefore, our model reproduces the non-monotone relationship between
forecast precision and trading profitability observed in
\cite{Schredelseker1984, Schredelseker2001}.
\begin{table}[t]
\centering
\begin{tabular}{c|cccc}
\hline
         $i$ & $\mu_i$ & $\sigma^2_i$ & $\varsigma_i$ & $\kappa_i$ \\
\hline
         1 &     -0.115 &      0.970 &     -0.001 &      2.825 \\

         2 &     -0.406 &      0.819 &     -0.029 &      2.018 \\
         3     & 0.016 & 0.983 & 0.000 & 2.900 \\
		4     & 0.285 & 0.902 & 0.010 & 2.444 \\

\hline
\end{tabular}  
\caption{Higher moments of the payoff for analysts $i=1,\ldots,4$: $\varsigma_i$ and $\kappa_i$ denote the skewness and kurtosis of agent $i$, respectively.}
\label{t:mom}
\end{table}
Figure~\ref{f:density} shows the density function of the payoff for
all analysts.
\begin{figure}
\centering
\includegraphics[scale=.6]{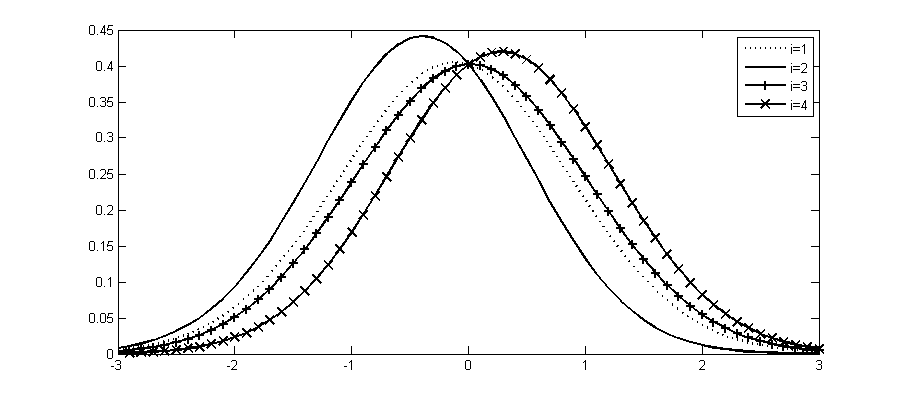} 
\caption{Density function for the payoffs of analysts $i=1,\ldots,4$.}
\label{f:density}
\end{figure}
Figure~\ref{f:mu1} shows the dependence of the expected payoff of
agent $1$ on the accuracy of his own signal (as measured by the
standard deviation $\sigma_1$) and on the correlation between his
signal $\xi_1$ and the aggregated signal of all other agents $r$. The
value of $\sigma_r$ is equal to $0.739$. The non-monotonic
relationship between $\mu_1$ and $\sigma_1$ is clearly displayed for
some positive values of $\rho_{1r}$. Moreover, while for large (small)
values of $\sigma_1$ the expected payoff is decreasing (increasing) as
$\rho_{1r}$ increases, for intermediate values of $\sigma_1$ a
non-monotonic behaviour can be observed. Given $w_1=0.25<1/2$, $\mu_1$
is first decreasing and then increasing in $\rho_{1r}$.
\begin{figure}
\centering
\includegraphics[scale=.4]{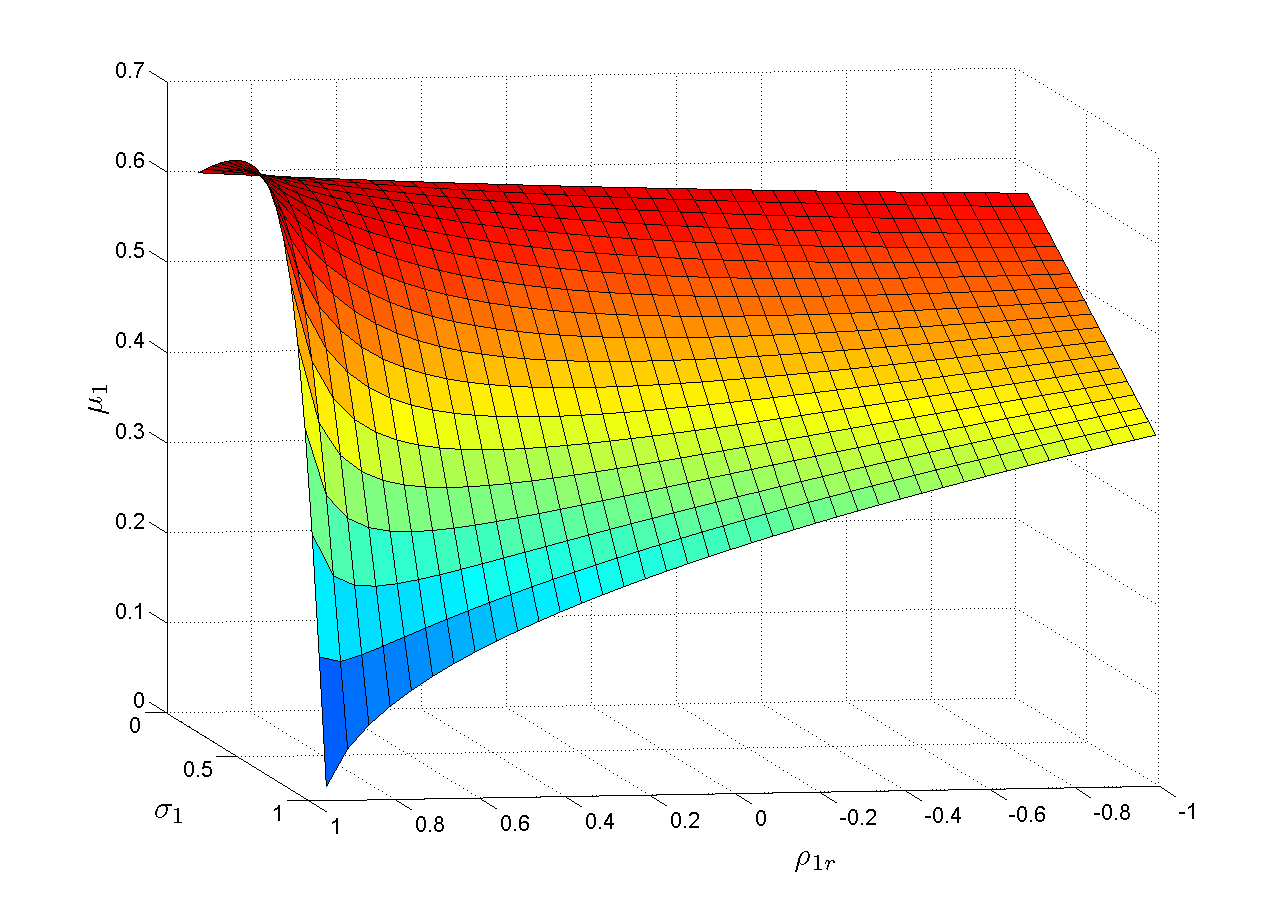} 
\caption{Expected payoff of analysts $1$ as function of $\sigma_1$ and
  $\rho_{1r}$, with $\sigma_r=0.739$.}
\label{f:mu1}
\end{figure}

\section{Conclusion}\label{s:con}
We propose a model in which analysts differ in the precision of their
signals as well as in the correlation between each other. We provide a
complete characterization of the trading payoff of each agent,
obtaining its probability density function in closed form, as well as
explicit expressions for its moments of all orders. Such precise
description allows us to perform a detailed sensitivity analysis, with
the following implications: If the forecasts of an analyst are
negatively correlated with the aggregated forecasts of the others,
then he always takes advantage of improving his forecast skills. If
the correlation between the forecasts is positive, then a
non-monotonic relationship between forecast precision and trading
profitability exists. Furthermore, the impact of correlation on the
expected payoff of an analyst depends on the relative accuracy of his
signal: if the relative accuracy is below (above) a certain threshold,
then he suffers (benefits) from an increasing correlation, while for
intermediate levels of relative accuracy the relationship is
non-monotonic.

This is the first time, to the best of our knowledge, that an
analytical model is proposed, which fully explains the non-trivial
interplay between forecast accuracy and trading performance. Our model
recovers as special cases previous (partial) results from simulation
studies, and is able to explain the different -- sometimes
contradicting -- empirical observations reported in the literature.
Finally, we provide strong evidence that empirical studies on the
relationship between forecast precision and trading profitability need
to take into account the correlation structure of the forecasts.

\appendix
\section{Appendix}
\begin{lemma}     \label{lm:atan}
  Let $a \in \erre$. One has
  \begin{align*}
    \int_0^\infty \Phi(ax)\,\phi(x)\,dx &= \frac14 + \frac{1}{2\pi}\arctan a,\\
    \int_{-\infty}^0 \Phi(ax)\,\phi(x)\,dx &= \frac14 -
    \frac{1}{2\pi}\arctan a.
  \end{align*}
\end{lemma}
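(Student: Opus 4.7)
The plan is to treat the first integral as a function of the parameter $a$, differentiate under the integral sign, and recognise the derivative as that of $\arctan$. The second identity will then follow from the first by an elementary symmetry argument, using $\Phi(-y) = 1 - \Phi(y)$ and the evenness of $\phi$.

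More explicitly, set $I(a) := \int_0^\infty \Phi(ax)\phi(x)\,dx$. Since $|\partial_a(\Phi(ax)\phi(x))| = x\phi(ax)\phi(x) \le x\phi(x)/(2\pi)$ is integrable on $[0,\infty)$ uniformly in $a$, differentiation under the integral is justified and gives
\[
I'(a) = \int_0^\infty x\,\phi(ax)\phi(x)\,dx = \frac{1}{2\pi}\int_0^\infty x\,e^{-(1+a^2)x^2/2}\,dx = \frac{1}{2\pi(1+a^2)}.
\]
Since $I(0) = \Phi(0)\int_0^\infty \phi(x)\,dx = \frac14$, integrating the above from $0$ to $a$ yields the first identity $I(a) = \frac14 + \frac{1}{2\pi}\arctan a$.

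For the second identity, substitute $x \mapsto -x$ and use $\phi(-x)=\phi(x)$ together with $\Phi(-y) = 1 - \Phi(y)$ to obtain
\[
\int_{-\infty}^0 \Phi(ax)\phi(x)\,dx = \int_0^\infty \Phi(-ax)\phi(x)\,dx = \int_0^\infty \bigl(1-\Phi(ax)\bigr)\phi(x)\,dx = \frac12 - I(a),
\]
which gives $\frac14 - \frac{1}{2\pi}\arctan a$ as claimed.

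There is no serious obstacle here: the only point requiring a word of care is the justification for differentiating under the integral, but the dominating bound above makes this immediate. An alternative, equally short route would be probabilistic: $I(a)$ equals $\mathbb{P}(X>0,\,Y\le aX)$ for independent standard Gaussians $X,Y$, and by rotational invariance of the law of $(X,Y)$ this is the angular measure of the sector $\{X>0,\,Y\le aX\}$ divided by $2\pi$, namely $(\pi/2 + \arctan a)/(2\pi)$; I would mention this as a sanity check but carry out the computation via the differentiation argument, which is self-contained and uses no auxiliary facts beyond the basic Gaussian integral.
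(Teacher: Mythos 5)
Your proof is correct, but it takes a genuinely different route from the paper's. You treat the integral as a function $I(a)$ of the parameter, differentiate under the integral sign to get $I'(a)=\frac{1}{2\pi(1+a^2)}$, and integrate from $I(0)=\tfrac14$; the paper instead writes $\Phi(ax)=\tfrac12+\int_0^{ax}\phi(y)\,dy$ for $a>0$, converts the integral into the standard Gaussian measure of the cone $\{0\le x<\infty,\ 0\le y\le ax\}$, and reads off $\arctan(a)/(2\pi)$ from rotational invariance --- which is exactly the probabilistic argument you sketch as a sanity check. The two routes are essentially dual ways of producing the arctangent; yours handles all $a\in\R$ in one stroke, whereas the paper's cone argument requires a separate reduction for $a<0$ via $\Phi(ax)=1-\Phi(|a|x)$. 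Your derivation of the second identity from the first (substituting $x\mapsto -x$ and using $\Phi(-y)=1-\Phi(y)$) coincides with the paper's. One minor slip: the dominating bound should be $x\,\phi(ax)\phi(x)\le x\,\phi(x)/\sqrt{2\pi}$, since $\phi(ax)\le\phi(0)=1/\sqrt{2\pi}$, rather than $x\,\phi(x)/(2\pi)$; this is immaterial, as any constant multiple of $x\phi(x)$ is integrable on $[0,\infty)$ uniformly in $a$, so the differentiation under the integral sign remains justified.
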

\begin{proof}
  For $a>0$ one has
  \[
  \Phi(ax) = \frac12 + \int_0^{ax} \phi(y)\,dy 
  \qquad \forall x \geq 0,
  \]
  hence
  \[
  \int_0^\infty \Phi(ax)\,\phi(x)\,dx =
  \int_0^\infty \left(\frac12 + \int_0^{ax} \phi(y)\,dy \right)
  \phi(x)\,dx = \frac14 + \int_{D_a} \phi_2(x,y)\,dx\,dy,
  \]
  where $\erre^2 \supset D_a := \bigl\{ 0 \leq x < \infty, \; 0 \leq y
  \leq ax\bigr\}$ and $\phi_2$ stands for the density of the standard
  Gaussian measure on $\erre^2$. Since $D_a$ is a cone of $\erre^2$
  with vertex at the origin and aperture equal to $\arctan a$, taking
  the rotational invariance of $\phi_2$ into account (or,
  equivalently, passing to radial coordinates), we have
  \[
  \int_{D_a} \phi_2(x,y)\,dx\,dy = \frac{\arctan a}{2\pi}.
  \]
  Let us now assume $a<0$. Then $\Phi(ax) = 1-\Phi(|a|x)$ for all $x
  \geq 0$, hence, recalling that $\arctan$ is odd,
  \[
  \int_0^\infty \Phi(ax)\,\phi(x)\,dx = \frac12 
  - \int_0^\infty \Phi(|a|x)\,\phi(x)\,dx = \frac12 - \frac14 
  - \frac{\arctan |a|}{2\pi} = \frac14 + \frac{\arctan a}{2\pi}.
  \]
  The first identity is thus proved. The second follows immediately:
  \[
  \int_{-\infty}^0 \Phi(ax)\,\phi(x)\,dx = 
  \int_0^\infty \Phi(-ax)\,\phi(x)\,dx =
  \int_0^\infty \bigl(1-\Phi(ax)\bigr)\,\phi(x)\,dx =
  \frac12 - \frac14 - \frac{1}{2\pi}\arctan a.
  \qedhere
  \]
\end{proof}

\begin{lemma}     \label{lm:phi2}
  One has
  \[
  \int_{-\infty}^y \Phi(bx) \phi(x)\,dx =
  \Phi_2\bigl(y,0;-b(1+b^2)^{-1/2}\bigr)
  \]
\end{lemma}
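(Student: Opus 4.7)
The plan is to prove this by representing the right-hand side as a probability and then recognizing the integral on the left as a conditional expectation.

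First I would let $(U,V)$ be a pair of standard Gaussian random variables with correlation coefficient $\rho$, so that by definition
\[
\Phi_2(y,0;\rho) = \Pb(U \leq y,\, V \leq 0).
\]
Using the standard Cholesky-type representation, I would write $V = \rho U + \sqrt{1-\rho^2}\,W$, where $W \sim N(0,1)$ is independent of $U$. Conditioning on $U$ yields
\[
\Phi_2(y,0;\rho) = \int_{-\infty}^y \Pb\bigl( W \leq -\rho u/\sqrt{1-\rho^2} \bigr)\,\phi(u)\,du
= \int_{-\infty}^y \Phi\!\left(\frac{-\rho u}{\sqrt{1-\rho^2}}\right) \phi(u)\,du.
\]

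The next step is to match parameters: I want the coefficient inside $\Phi$ to equal $b$, i.e. $b = -\rho/\sqrt{1-\rho^2}$. Solving, $\rho^2(1+b^2) = b^2$, and choosing the sign so that the relation $b = -\rho/\sqrt{1-\rho^2}$ holds gives $\rho = -b/\sqrt{1+b^2}$, which is precisely the correlation appearing in the statement. A short verification confirms that $\sqrt{1-\rho^2} = (1+b^2)^{-1/2}$ and hence $-\rho/\sqrt{1-\rho^2}=b$, completing the identification.

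There is no real obstacle here; the only thing to watch is the sign of $\rho$, since both $\rho$ and $-\rho$ satisfy $\rho^2 = b^2/(1+b^2)$, and only one choice makes the integrand come out to $\Phi(bx)$ rather than $\Phi(-bx)$. If one prefers to avoid probabilistic language altogether, the same proof can be carried out analytically: differentiating both sides with respect to $y$ and using the fact that the partial derivative $\partial_1 \Phi_2(y,0;\rho) = \phi(y)\Phi\bigl(-\rho y/\sqrt{1-\rho^2}\bigr)$ yields the integrand on the left, while both sides tend to $0$ as $y \to -\infty$; this gives the identity by the fundamental theorem of calculus with the same choice of $\rho$.
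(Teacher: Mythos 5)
Your proof is correct. It is the probabilistic mirror image of the paper's argument: the paper starts from the left-hand side, rewrites $\Phi(bx)$ via the linear substitution $z \mapsto z\sqrt{b^2+1}+bx$, and then recognizes the resulting double integral over $\left]-\infty,y\right]\times\left]-\infty,0\right]$ as the integral of the $N_{\R^2}(0,R)$ density with $\rho=-b(1+b^2)^{-1/2}$; you start from the right-hand side, write the correlated pair via the Cholesky representation $V=\rho U+\sqrt{1-\rho^2}\,W$, and condition on $U$ to produce the left-hand side. The linear map you use is exactly the inverse of the paper's change of variables, so the two derivations encode the same computation, but yours avoids manipulating the quadratic form explicitly and makes the sign of $\rho$ (the one delicate point) transparent through the requirement $b=-\rho/\sqrt{1-\rho^2}$. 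Your alternative closing remark via $\partial_1\Phi_2(y,0;\rho)=\phi(y)\Phi\bigl(-\rho y/\sqrt{1-\rho^2}\bigr)$ and the vanishing of both sides at $-\infty$ is also sound, since that partial derivative is just $\phi(y)\,\P(V\leq 0\mid U=y)$, which your conditioning step already establishes. No gaps.
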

\begin{proof}
  It is enough to write
  \[
  \Phi(bx) = \int_{-\infty}^{bx} \phi(z)\,dz =
  \int_{-\infty}^0 \phi(z+bx)\,dz = 
  \sqrt{b^2+1} \int_{-\infty}^0 \phi(z\sqrt{b^2+1}+bx)\,dz,
  \]
  which implies
  \[
  \int_{-\infty}^y \Phi(bx) \phi(x)\,dx =
  \frac{\sqrt{b^2+1}}{2\pi} \, \int_{\Xi_y} \exp\Bigl(
  - \frac12\bigl( (b^2+1)x^2 + 2b(b^2+1)^{1/2}xz + (b^2+1)z^2 \bigr) 
  \Bigr)\,dx\,dz,
  \]
  where $\erre^2 \supset \Xi_y:=\left]-\infty,y\right] \times
  \left]-\infty,0\right]$.  Writing
  \begin{align*}
  (b^2+1)x^2 + 2b(b^2+1)^{1/2}xz + (b^2+1)z^2 &= 
  (b^2+1) \left( x^2 + z^2 - 2 \frac{-b}{\sqrt{b^2+1}}xz \right)\\
  &= \frac{1}{1-\rho^2} \bigl( x^2 - 2\rho xz + z^2 \bigr),
  \end{align*}
  with $\displaystyle \rho:= -\frac{b}{(b^2+1)^{1/2}}$, we are left
  with
  \[
  \int_{-\infty}^y \Phi(bx) \phi(x)\,dx = \int_{\Xi_y} f(x,z;R)\,dx\,dz,
  \]
  where $f(\cdot,\cdot;R)$ denotes the density of $N_{\erre^2}(0,R)$, with
  \[
  R =
  \begin{bmatrix}
    1 & \rho\\
    \rho & 1
  \end{bmatrix} =
  \begin{bmatrix}
    1 & \displaystyle -\frac{b}{\sqrt{b^2+1}}\\
    \displaystyle -\frac{b}{\sqrt{b^2+1}} & 1
  \end{bmatrix}.
  \qedhere
  \]
\end{proof}

\bibliographystyle{chicago}
\bibliography{CorrError}

\end{document}